\title{Different Forms of Imbalance in Strongly Playable Discrete Games I: Two-Player $RPS$ Games}
\author{Itai Maimon}
\affil{Department of Mathematics \\
University of California San Diego \\
La Jolla, CA 92093-0112 (USA) \\
\{\tt imaimon\}@ucsd.edu}
\date{\today}
\begin{document}

\maketitle
\begin{abstract}
We construct several definitions of imbalance and playability, both of which are related to the existence of dominated strategies. Specifically, a maximally balanced game and a playable game cannot have dominated strategies for any player. In this context, imbalance acts as a measure of inequality in strategy, similar to measures of inequality in wealth or population dynamics. Conversely, playability is a slight strengthening of the condition that a game has no dominated strategies. It is more accurately aligned with the intuition that all strategies should \textit{see play}. We show that these balance definitions are natural by exhibiting a $(2n+1)$-$RPS$ that maximizes all proposed imbalance definitions among playable $RPS$ games. We demonstrate here that this form of imbalance aligns with the prevailing notion that different definitions of inequality for economic and game-theoretic distributions must agree on both the maximal and minimal cases \cite{Cowell2000_MeasurementOfInequality}. In the sequel paper, we utilize these definitions for multiplayer games to demonstrate that a generalization of this imbalanced $RPS$ is at least nearly maximally imbalanced while remaining playable for under $50$ players.
$\,$

$\,$

\end{abstract}

\tableofcontents
\newpage

\section{Introduction: Definitions for Imbalance and Playability}
Classic Rock-Paper-Scissors ($RPS$) has seen many variants, generalizations, and applications in the past several years \cite{KerrEtAl2002LocalDispersalRPS} \cite{SpiroSuryaZeng2022_SemiRestrictedRPS}. An example application of this game appears in the population dynamics of the males of the side-blotched lizard \cite{SinervoLively1996RPSLizards}.  There are three varieties of these male lizards differentiated by their distinctive orange, blue, or yellow throat coloring. The $RPS$ dynamic appears in the mating proportions of the species, where the orange males have a direct mating advantage over the blues and, in isolation, would overpopulate them to extinction. However, as soon as there is a majority of orange lizards, the growth rate of the yellow lizards increases, as they have a mating advantage over the orange lizards. Finally, as the population of yellow lizards grows and the number of orange lizards decreases, the growth rate of blue lizards increases because of a similar mating advantage, and the cycle continues. 

Another example application occurs in the world of competitive trading card games. In competition, most strategies are describable in terms of the archetypes of deck constructions, which have a fundamental $RPS$ structure between them. For instance, decks that start very aggressively generally lose against decks that build up a little slower and more linearly, which typically lose against even slower decks that need time to build to a given combo, which then, in turn, lose against the very fast aggressive decks. This $RPS$ dynamic ensures that no pure strategy dominates the competition.

The $RPS$ game at the base of most of these models is primarily balanced in that each choice of object \textit{beats} half of the remaining objects and \textit{is beaten by} the other half. We can then consider what the effect would be if we reversed this approach. Unfortunately, a maximally imbalanced $RPS$ game should contain a single object that beats all other objects, but then the only undominated strategy would be to play just this object. This game clearly cannot be used to model any interesting dynamics, as almost all objects will never be played. To counteract this, we want to consider games that are maximally imbalanced yet playable in that each object \textit{sees play}. Our thoughts on the effect of imbalance on ecologies and trading card games are expanded in section \ref{applications}. Broadly, we would expect that imbalanced playable games model more asymmetrical yet stable environments or meta-games, respectively. We further conjecture that this stable imbalance gives rise to interesting and desirable properties. This work focuses on making these imbalanced definitions sensible within this context by directly showing that, while they may disagree on some instances, they agree on the most and least balanced playable $RPS$ games. Thereby, confirming the commonly held notion that different measures of inequality/asymmetry should agree on the most inequitable and most equitable distributions \cite{Cowell2000_MeasuringInequality}. 

To make these terms more precise, we give the following definitions:
\begin{defi}[$RPS$]
We define an $n$-$RPS$ game as a game played on an unlabeled $n$-tournament, or a choice of orientations on each edge of the complete graph on $n$ vertices, $K_n$. It is played by two players, $p$ and $q$, each of whom chooses a vertex, $u_p,u_q\in K_n$. If $u_p=u_q$, the game is a tie, and both players receive a payoff of $0$. Otherwise, on the edge $u_pu_q$, the player who chose the negatively oriented, or outgoing, vertex loses, and the player who chose the positively oriented, or incoming,  vertex wins. The payoff for the winner is $+1$, and for the loser, $-1$. Each pure strategy in this game corresponds to choosing to play a specific vertex, which in $RPS$ terminology we define as choosing an object (e.g., one chooses an object from rock, paper, or scissors). For an $n$-$RPS$, $G$, we will refer to $G$ as both the tournament that the game is based on and the game itself played on said tournament, e.g., we can talk both of the vertices of $G$ and the payoffs of players $p,q$ in $G$. 
\end{defi}

Playability for a two-player $RPS$ game will correspond broadly to the existence of a Nash equilibrium in which both players' mixed strategies encompass all possible pure strategies. Here we define an extension of this concept that can also be applied to many types of multiplayer games, and captures that each object has a positive probability of being played in an instance of a Nash equilibrium. We will explore these extensions more heavily in the sequel. We will see in lemma \ref{RPS-is-trivial} that in the two-player case, this will force both strategies to be totally mixed.

\begin{defi}[$k$-Playable]
For $k>0$, an $RPS$ game is $k$-playable if a Nash equilibrium, $N$, exists, such that each object has a positive probability of being played by $k$ players. In other words, in $N$, for each object, $o$, there are at least $k$ players, $p^1_o\dots p^k_o$, whose mixed strategy has a positive probability of playing $o$. In this case, we say that $o^k$ \textit{sees play} by $p^1_o\dots p^k_o$ in Nash equilibrium $N$. A game is $k$-strongly playable if each Nash equilibrium has this property. On the other hand, a game is $k$-weakly playable if for each object $o$ there is a Nash equilibrium, $N_o$, in which $o^k$ sees play. We say a game is playable if it is $1$-playable, and similarly define a game as strongly/weakly playable if it is $1$-strongly/$1$-weakly playable.
\end{defi}

\subsubsection*{$k$-Playability for Asymmetric and Non-Discrete Games}
This definition is directly applicable to any symmetric game over a discrete set of strategies. However, for asymmetric games, playability must be defined as the existence of a Nash equilibrium where each player plays a totally mixed strategy. Then strong playability would imply that this is true for each Nash equilibrium, and weak playability that for each pure strategy, there exists a Nash equilibrium in which it sees play for the player that can play it.

For a non-discrete set of strategies, we replace the positive probability condition with the condition that the probability density is positive almost everywhere. For example, in an asymmetric game, if there exists an equilibrium such that each player's strategy has positive probability density almost everywhere, that game is playable; similarly, if all equilibria are playable, it is strongly playable. Finally, for each player $i$, we can consider the intersection over each Nash equilibrium, $N_\alpha$, of the subset for which player $i$'s probability densities in $N_\alpha$ are zero. If that set has measure zero for each player, then the game is weakly playable. We can similarly adjust these definitions for a subset of $k$ players in symmetric non-discrete games.  

\subsubsection*{Strongly Playable implies no Weakly Dominated Strategies and Weakly Playable implies no Dominated Strategies}
In order to better understand these conditions, it is worth checking how the playable conditions interact with the existence of dominated strategies. Before doing so, we should confirmf that the terminology of strong and weak is apt as it follows directly from the definition that a $k$-strongly playable game must also be $k$-playable and $(k-1)$-strongly playable. Similarly, a $k$-playable game is also both $k$-weakly playable and $(k-1)$-playable. Finally, a $k$-weakly playable game is also $(k-1)$-weakly playable. 

We can also show that in a weakly playable game, no strategy can be dominated, and in a strongly playable game, no strategy can be weakly dominated. For instance, if strategy $\sigma'$ dominates $\sigma$ for player $p$, then we can take the difference of these to form strategies $\tilde{\sigma}'$ and $\tilde{\sigma}$. Here, we define the difference of strategy $\sigma$ with respect to $\sigma'$ as the mixed strategy $\tilde{\sigma}$ with relative probability for each pure strategy $i$ versus $j$ of:
\begin{equation}
    \frac{P_{\tilde{\sigma}}(i)}{P_{\tilde{\sigma}}(j)}=\frac{P_\sigma(i)-\min(P_\sigma(i),P_{\sigma'}(i))}{P_\sigma(j)-\min(P_\sigma(j),P_{\sigma'}(j))}
\end{equation}
By construction, $\tilde{\sigma}'$ and $\tilde{\sigma}$ do not contain any of the same pure strategies, yet $\tilde{\sigma}'$ must still dominate $\tilde{\sigma}$. We can then see that any pure strategy in $\tilde{\sigma}$ does not see play in any Nash equilibrium, and the game cannot be weakly playable. Following identical logic shows that if $\sigma$ is weakly dominated by $\sigma'$, we can construct a Nash equilibrium without the objects in $\tilde{\sigma}$ contradicting the assumption of strong playability. 

The reasoning for these definitions is to restrict ourselves to games in which each pure strategy should be valid to consider under some given meta-game. This definition is to differentiate this game from the case where a pure strategy is not strictly or even weakly dominated by any strategy, yet said pure strategy would never occur in a competitive environment. The three different forms of playability then define three different ways an $RPS$ game can have every strategy see play in a competitive environment. For weakly playable games, each strategy has a meta-game in which it is played with positive probability. In contrast, for playable games, there is a meta-game such that each strategy is played with a positive probability. In strongly playable games, each meta-game must have each strategy played with positive probability. In the two-player $RPS$ context, we will see that these definitions all agree, yet in the sequel, we will see that showing $k$-strong playability is much more complex than showing $k$-weak playability for more than two players, even when restricting to games that lack weakly dominated strategies.

\begin{ex}[$4$-$RPS$]
To see an example of an unplayable $RPS$ game, consider the French variant of Rock-Paper-Scissors, known as \textit{pierre-papier-ciseaux-puits}, rock still beats scissors, which beats paper, which beats rock, but there is a fourth object, well. Well beats rock and scissors and loses to paper. So, well weakly dominates rock as it gives the same payoff if the opponent plays scissors or paper, but better payoffs if the opponent chooses rock or well. This game is not weakly playable, as if there was a Nash equilibrium in which player $p$ gave positive probability $P(R)$ of choosing rock, player $q$ could mirror $p$'s strategy for paper and scissors but increase $p$'s probability of well by $P(R)$. Player $q$ then has a strictly positive expected value; however, as in any Nash equilibrium of a symmetric zero-sum two-player game, either player can always mirror the other to secure zero expected value, the Nash equilibrium of these games must have zero expected value for both players. Therefore, $q$ has strictly improved his position, and so, the position could not have been a Nash equilibrium. Thus, this game is unplayable. 
\end{ex}

Given these definitions of playable, we can now contrast them with the following pair of definitions of balance for a game.
\subsubsection*{Two Forms of Imbalance of a Game}
The first form of imbalance that we consider is based on the fact that in a balanced game, we would expect each object in the game to win as often as they lose. Therefore, the more spread the difference between these values are, the farther the game is from being balanced.

\begin{defi}[Combinatorial Balance] \label{combimb}
 For $n\in \N$, an $n$-$RPS$, $G_1$, is more imbalanced than another $n$-$RPS$, $G_2$, if the difference of outgoing and incoming edges in $G_1$ has greater variance than that of $G_2$.
\end{defi}

The combinatorial aspects of the underlying tournament entirely define this form of balance, while the following definition is based on the Nash equilibria of the underlying game. This second form of imbalance is based on the idea that in a maximally imbalanced game, all players play the same object and continually tie. In contrast, in the maximally balanced game, the number of ties in symmetric equilibria should be minimized. As different Nash equilibria have different numbers of ties, we should choose the symmetric Nash equilibria that minimize expected ties, corresponding to the idea that this game can be played with a \textit{balanced} meta-game. 

\begin{defi}[Distributional Imbalance] \label{distrimb}
     For $n\in \N$, an $n$-$RPS$, $G_1$, is more imbalanced than another $n$-$RPS$, $G_2$, if under their symmetric Nash equilibria which minimizes ties, $n_1,n_2$, the expected number of ties in $n_1$ is higher than the expected number of ties in $n_2$.
\end{defi}

 The intuition behind both of these definitions is that $G_1$ is more imbalanced than $G_2$ if the \textit{good} objects in $G_1$ beat more objects than the \textit{good} objects of $G_2$, and the \textit{bad} objects in $G_1$ lose to more objects than the \textit{bad} objects in $G_2$. Where a \textit{good} object beats more objects than it loses to, and a \textit{bad} object loses to more objects than it beats. In non-playable games, it is easy to make arbitrarily imbalanced games; however, in playable games, all objects can be played, and so, there must be a Nash equilibrium where even the worst object must have a probability of being played.

In section \ref{seinfeldunfairsect}, we expand these definitions to more general games and define two classes of imbalance definitions: one based on the combinatorial balance and the underlying structure of the game, and the other based on the distributional imbalance and the structure of the Nash equilibria of the game. In section \ref{imbalanced}, we provide a game that we show in section \ref{unfairsect} maximizes both imbalance definitions above, as well as the others found in section \ref{seinfeldunfairsect}.

\subsection{Finding Nash Equilibria of $RPS$ Games}

Here, we will show how the playability definitions when restricted to two-player $RPS$ games are equivalent to both players playing totally mixed strategies in each Nash equilibrium, or equivalently, that the underlying tournament is strong \cite{Stockmeyer2022TournamentScoreSequences}. More precisely, we will show the following lemma:

\begin{lem} \label{RPS-is-trivial}
    In a weakly playable $RPS$, $G$, there is a unique Nash equilibrium, which must be symmetric. Thus, if $G$ is weakly playable, it must also be $2$-strongly playable. In particular, this condition is equivalent to the unique Nash equilibrium being totally mixed for both players.
\end{lem}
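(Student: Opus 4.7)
The plan is to combine two-player symmetric zero-sum game theory with the uniqueness-of-optimal-strategy result for tournament matrices. Let $A$ denote the payoff matrix of $G$, so $A^{T}=-A$. By the Minimax Theorem and exchangeability of equilibria in zero-sum games, the set of Nash equilibria is of the form $S\times S$ where $S$ is the convex set of optimal (maximin) strategies, the value of the game is $0$, and the symmetric equilibria correspond exactly to the elements $\sigma\in S$. So uniqueness of the Nash equilibrium reduces to proving $|S|=1$, and in that case the unique equilibrium is automatically symmetric.

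First I would use weak playability to produce a totally mixed $\sigma^{*}\in S$. For each object $o$, weak playability provides a Nash $N_{o}$ in which $o$ sees play. By exchangeability I may assume $N_{o}=(\sigma^{(o)},\sigma^{(o)})$ is symmetric with $o\in\mathrm{supp}(\sigma^{(o)})$. Then $\sigma^{*}:=\tfrac{1}{n}\sum_{o}\sigma^{(o)}\in S$ by convexity of $S$, and $\sigma^{*}$ has full support. Next, for any $\tau\in S$ I would show $A\tau=0$: optimality forces $(A\tau)_{i}\leq 0$ for every $i$, while the value identity $\sigma^{*T}A\tau=0$ combined with $\sigma^{*}_{i}>0$ for every $i$ forces equality throughout. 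In particular $A\sigma^{*}=0$ as well, and every element of $S$ lies in $\ker A\cap\Delta$.

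The main obstacle is then concluding $\tau=\sigma^{*}$ from $\sigma^{*},\tau\in\ker A\cap\Delta$; for a general antisymmetric $A$ this would fail. This step crucially uses that $A$ is a tournament matrix (entries in $\{0,\pm 1\}$ with zero diagonal), which gives the classical Fisher--Ryan uniqueness theorem: a two-player tournament game admits a unique optimal mixed strategy. I would either invoke this directly or reproduce a short argument using that the supports of optimal strategies in tournament games have odd cardinality, together with the fact that weak playability forces the underlying tournament to be strongly connected, appealing to the score-sequence characterization in \cite{Stockmeyer2022TournamentScoreSequences}. The result is $S=\{\sigma^{*}\}$.

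Combining these pieces, the unique Nash equilibrium is $(\sigma^{*},\sigma^{*})$, which is symmetric and totally mixed for both players. Since $\sigma^{*}_{i}>0$ for every object $i$, each object is played with positive probability by both players in the (unique, hence every) Nash equilibrium, establishing $2$-strong playability. The reverse direction of the equivalence is immediate: if the unique Nash equilibrium is totally mixed for both players, then every object sees play by both players in that equilibrium, which in particular witnesses weak playability.
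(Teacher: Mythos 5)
Your proposal is correct, and its skeleton matches the paper's: reduce the equilibrium set to $S\times S$ via exchangeability and the zero value, use weak playability to force every optimal strategy into $\ker A\cap\Delta^{n-1}$, and then argue that this set is a single totally mixed point. The genuine difference is in the last, crucial step. The paper does not cite an external uniqueness theorem; it proves its own workhorse result (Lemma \ref{evenrps}), an induction on block decompositions showing that every even-dimensional skew-symmetric matrix with odd-rational off-diagonal entries has determinant a square of an odd rational, hence trivial kernel. It then uses a delete-one-coordinate trick: if an equilibrium vector had a zero entry at object $i$, deleting row and column $i$ yields a nonzero kernel vector of a $2n$-dimensional tournament matrix, a contradiction; and if $\ker A$ were at least two-dimensional, a suitable combination lands in the simplex with a zero entry, giving the same contradiction. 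Your primary route, invoking the Fisher--Ryan uniqueness theorem for tournament games, is legitimate and shortens the argument to a citation, at the cost of importing a result the paper takes pains to derive self-containedly (and which is not in its bibliography). Your averaging construction of a totally mixed $\sigma^{*}$ from the per-object equilibria $N_{o}$ is a nice touch the paper does not use; the paper instead shows directly that an object with negative expected payoff against some equilibrium strategy can never see play in any equilibrium. One caution: your fallback ``short argument'' is underspecified as stated. Knowing that supports of optimal strategies have odd cardinality does not by itself rule out two distinct elements of $\ker A\cap\Delta^{n-1}$, since the kernel vector you would produce by moving along their difference could have odd support; what makes the parity argument bite is deleting exactly one zero coordinate to land in even dimension, which is precisely the paper's trick. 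If you rely on the fallback rather than the citation, you need that step spelled out.
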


 \subsubsection*{Expected Payoffs in Nash Equilibria of Symmetric Zero-Sum Games are Zero}
 The payoff matrix of a two-player symmetric zero-sum game with $n$ strategies is an antisymmetric $n\times n$ matrix, such that the $(i,j)$-th entry corresponds to the payoff for the player who played strategy $i$ against a player who plays strategy $j$. 

\begin{defi}
    A probability vector is a vector in the simplex, i.e., a vector with non-negative entries, such that the sum of all entries is $1$. We let $\Delta^{n-1}$ be the set of probability vectors in $\R^n$.
\end{defi}
 
 If we take the payoff matrix for player $q$, $A_q$, and multiply it by the probability vector corresponding to a mixed strategy of player $p$, $\mathbf{v}_p$, the result,  $A_q\mathbf{v}_p$, has entries, $(A_q\mathbf{v}_p)_i$, corresponding to the expected payoffs for the pure strategies of player $q$ \cite{gametheorytextbook}. As the games we consider are symmetric, define $A=A_q=A_p$. In a Nash equilibrium of a zero-sum symmetric game, each of these entries $(A\mathbf{v}_p)_i$ must be non-positive. For if one of these had a positive value, then when keeping player $p$'s strategy constant, player $q$ would have a positive expected value with the pure strategy of playing that object. This contradicts the fact that in a symmetric zero-sum two-player game, the Nash equilibrium must give each player zero net payoff \cite{gametheorytextbook}.

Thus, a pair of probability vectors, $(\mathbf{v}_p,\mathbf{v}_q)$, corresponds to a Nash equilibrium only if both $A\mathbf{v}_p$ and $A\mathbf{v}_q$ have non-positive entries. Let $N(A)$ be the set of probability vectors, $\mathbf{v}_p$, such that $A\mathbf{v}_p$ has only non-positive entries. Then, $N(A)\times N(A)$ must contain the set of all Nash equilibria for the given symmetric game. Moreover, for any two such vectors $\mathbf{v}_p,\mathbf{v}_q$, as $A\mathbf{v}_p$ and $A\mathbf{v}_q$ have only non-positive entries:
\begin{equation}
    E_p(\mathbf{v}_p)=\mathbf{v}_p\cdot(A\mathbf{v}_q) \implies E_p(
    \mathbf{v}_P)\leq 0
\end{equation}
Where $E_p(\mathbf{v}_p)$ is the expected payoff for player $p$ playing the mixed strategy corresponding to $\mathbf{v}_p$.

Similarly $E_q(\mathbf{v}_q)\leq 0$. However, as these games are zero-sum, 
 \begin{equation}
     E(\mathbf{v}_p)=E(\mathbf{v}_q)=0
 \end{equation}
So neither player can increase their expected outcome, and any element  $(\mathbf{v}_p,\mathbf{v}_q) \in N(A)\times N(A)$ is a Nash equilibrium. 

\subsubsection*{Nash Equilibria in Playable Symmetric Zero-Sum Games are Kernels of the Payoff Matrix}
    
If in a Nash equilibrium, of a symmetric game with payoff matrix $A$ player $p$ plays probability vector $\mathbf{v}_p$, such that $(A\mathbf{v}_p)_i$ is negative, then player $q$'s probability of playing this object, $v_{q,i}$ must be $0$. If not, as $\mathbf{v}_q$ is part of a Nash equilibrium, $E_q(\mathbf{v}_q)=0$. As object $i$ provides negative expected value, $v_{q,i}\cdot A(\mathbf{v}_p)_i<0$, there must be an object, $j$, such that, $v_{q,j}\cdot A(\mathbf{v}_p)_j>0$, then playing only $j$ is a strict improvement in $q$'s strategy. Therefore, as long as $p$ is playing $\mathbf{v}_p$, $q$ cannot play any strategy containing $i$ with positive probability in a Nash equilibrium. As each vector $\mathbf{v}$ in $N(A)$ has a Nash equilibrium given by $(\mathbf{v}_p,\mathbf{v})$, $v_i=0$ for all $\mathbf{v} \in N(A)$. Thus, a game with such an object cannot be weakly playable. By the contrapositive, in a weakly playable RPS, each object must have exactly zero expected payoff in every Nash equilibrium, i.e., for all $\mathbf{v}\in N(A)$,
\begin{equation}
    A\mathbf{v}=\mathbf{0}
\end{equation}
We have then shown the following lemma:

\begin{lem}
    The set of Nash equilibria of a weakly playable symmetric zero-sum two-player game must be $(ker(A)\cap \Delta^{n-1}) \times (ker(A)\cap \Delta^{n-1})$, i.e., $N(A)=ker(A)\cap \Delta^{n-1}$
\end{lem}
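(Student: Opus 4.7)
The two preceding subsections have essentially assembled all the ingredients, so my plan is to organize them into two containments. First I would recall the already-established characterization: $N(A)\times N(A)$ contains every Nash equilibrium of a symmetric zero-sum two-player game, and conversely every element of $N(A)\times N(A)$ is a Nash equilibrium because $\mathbf{v}_p\cdot(A\mathbf{v}_q)\leq 0$ combined with the zero-sum property forces $E_p=E_q=0$. So it suffices to prove the set-level identity $N(A)=\ker(A)\cap\Delta^{n-1}$, from which the product structure of the Nash equilibrium set follows immediately.

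For the containment $\ker(A)\cap\Delta^{n-1}\subseteq N(A)$, I would simply observe that if $A\mathbf{v}=\mathbf{0}$ then every entry of $A\mathbf{v}$ is (non-strictly) non-positive, so $\mathbf{v}\in N(A)$ by definition; this direction uses nothing about playability. The substantive direction is $N(A)\subseteq\ker(A)\cap\Delta^{n-1}$. Fix $\mathbf{v}\in N(A)$ and suppose for contradiction that some coordinate $(A\mathbf{v})_i<0$. I would then apply the argument already given in the previous subsection: in any Nash equilibrium $(\mathbf{v},\mathbf{w})$ with $\mathbf{w}\in N(A)$, object $i$ contributes strictly negative expected value for player $q$, and since $E_q(\mathbf{w})=0$ some other pure strategy $j$ must contribute strictly positive value, making the deviation to the pure strategy $j$ a strict improvement over any mixed strategy placing weight on $i$. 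Hence $w_i=0$ for every $\mathbf{w}\in N(A)$, so object $i$ sees play in no Nash equilibrium, contradicting weak playability.

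The only step that requires care is using the weak playability hypothesis in the right place: weak playability guarantees that for \emph{each} object $o$ there exists \emph{some} Nash equilibrium in which $o$ is played with positive probability, which is precisely the negation of what the previous paragraph derives. So once the containment $N(A)\subseteq\ker(A)$ is established coordinate-by-coordinate by contradiction, both inclusions are in hand and the product structure on Nash equilibria follows. I do not anticipate a real obstacle here, since the nontrivial analytic content (zero expected payoff, best-response deviation to a pure strategy, characterization of $N(A)$) has been established before the lemma statement; the task is really one of packaging.
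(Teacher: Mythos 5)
Your proposal is correct and follows essentially the same route as the paper: both argue that if $(A\mathbf{v})_i<0$ for some $\mathbf{v}\in N(A)$, then the zero-expected-payoff condition forces $w_i=0$ for every $\mathbf{w}\in N(A)$, so object $i$ sees play in no Nash equilibrium, contradicting weak playability, while the reverse containment $\ker(A)\cap\Delta^{n-1}\subseteq N(A)$ is immediate. The only difference is that you state that trivial reverse inclusion explicitly, which the paper leaves implicit.
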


Lemma \ref{evenrps} in the following subsection demonstrates that any even-dimensional payoff matrix of an $RPS$ game has a trivial kernel and, consequently, no such vectors. Given an $(2n+1)$-$RPS$, in which we have a Nash equilibrium, $(\mathbf{v}_p,\mathbf{v}_q)$, in which $v_{p,i}=0$. Consider a similar $2n$-$RPS$ except that it does not contain object $i$. In this $RPS$, let $\tilde{\mathbf{v}}_p$ be the vector giving the same probabilities to all other objects as $\mathbf{v}_p$. $\tilde{\mathbf{v}}_p$ is still a probability vector, and in the kernel of $\tilde{A}$. Here $\tilde{A}$ corresponds to the even-dimensional payoff matrix of this $2n$-$RPS$, which is similar to the payoff matrix $A$ for the original $(2n+1)$-$RPS$ except that we have deleted the row and column corresponding to object $i$ from $A$. Lemma \ref{evenrps} contradicts the existence of such a vector. Therefore, each Nash equilibrium of a playable $RPS$ must have both strategies be totally mixed, i.e., all such games are $2$-strongly playable.

For any two linearly independent vectors, $\mathbf{v}_1,\mathbf{v}_2$, in the kernel of $A$, there is a linear combination $\mathbf{v}=c(c_1\mathbf{v}_1+c_2\mathbf{v}_2)$ such that $\mathbf{v} \in \Delta^{n-1}$, and some entry $v_i=0$. This would construct a kernel for $\tilde{A}$, which is again a contradiction. Therefore, in playable $RPS$ games, the kernel is exactly 1-dimensional, and there is a unique distribution over the set of pure strategies for players $p$ and $q$ that is a Nash equilibrium, showing lemma \ref{RPS-is-trivial}. 

\subsubsection*{All $2n$-RPS Games are Unplayable}
For an even number of objects, $2n$, it is impossible to make a $2n$-$RPS$ game with $ker(A)\cap \Delta^{2n-1}$ nonempty. Take, for instance, a $4$-$RPS$. This game can never be playable, as one of the following must be the case:
\begin{enumerate}
    \item The game contains a copy of 3-RPS. In which case, the fourth object must weakly dominate or be weakly dominated by some object in the 3-RPS, as we saw in the Rock-Paper-Scissors-Well example above.
    \item Every set of three objects in the game is strictly ordered, which implies that all four objects must be strictly ordered.
\end{enumerate}

For the general case, we will show that the class of payoff matrices for $2n$-$RPS$ games has a trivial kernel. As the Nash equilibrium probability vector must be contained in the kernel and cannot be the zero vector, this is enough to show that all such games are unplayable. To do so, first, we notice that the payoff matrix is always skew-symmetric, with all off-diagonal entries being $\pm 1$. We will use the following stronger lemma to show that all such matrices have a positive determinant and therefore a trivial kernel.

\begin{lem}\label{evenrps}
    All $2n$-dimensional skew-symmetric matrices, in which each off-diagonal entry can be reduced to the form $\frac{a}{b}$ for both $a$ and $b$ odd integers, have a determinant, which is of the form $\frac{a^2}{b^2}$ for $a$ and $b$ odd integers. More particularly, this determinant is non-zero.
\end{lem}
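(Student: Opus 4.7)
The plan is to pass through the Pfaffian rather than attack the determinant directly. For any $2n \times 2n$ skew-symmetric matrix $A$, one has the classical identity $\det(A) = \mathrm{Pf}(A)^2$, where the Pfaffian admits the combinatorial expansion
\begin{equation}
    \mathrm{Pf}(A) = \sum_{M \in \mathcal{M}_{2n}} \mathrm{sgn}(M) \prod_{\{i,j\} \in M,\, i < j} A_{ij},
\end{equation}
with $\mathcal{M}_{2n}$ the set of perfect matchings of $\{1,\dots,2n\}$. The combinatorial fact I would exploit is that $|\mathcal{M}_{2n}| = (2n-1)!! = 1 \cdot 3 \cdot 5 \cdots (2n-1)$, which is a product of odd numbers and therefore odd. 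This is precisely why the Pfaffian is the right tool: the Leibniz expansion of $\det(A)$ has $(2n)!$ terms, an even count, which would defeat the parity argument I am about to use.

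Next I would clear denominators. Writing each off-diagonal entry in lowest terms as $A_{ij} = a_{ij}/b_{ij}$ with $a_{ij}, b_{ij}$ odd, set $Q = \prod_{i<j} b_{ij}$, which is a product of odd integers and hence odd. For each matching $M$, the corresponding contribution to $Q \cdot \mathrm{Pf}(A)$ becomes
\begin{equation}
    \mathrm{sgn}(M) \prod_{\{i,j\} \in M} a_{ij} \prod_{\{i,j\} \notin M} b_{ij},
\end{equation}
which is $\pm(\text{product of odd integers})$, hence $\pm(\text{odd})$. Thus $Q \cdot \mathrm{Pf}(A)$ is a signed sum of exactly $(2n-1)!!$ odd integers; since a signed sum of an odd number of odd integers is itself odd (reduce mod $2$), this numerator is odd. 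Therefore $\mathrm{Pf}(A)$ is a ratio of two odd integers, and because any divisor of an odd integer is odd, its reduced form $a/b$ still has $a$ and $b$ odd. Squaring through the identity $\det(A) = \mathrm{Pf}(A)^2$ yields $\det(A) = a^2/b^2$ with $a, b$ odd, and in particular $a \neq 0$ forces $\det(A) \neq 0$.

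The main obstacle, and the only genuinely clever step, is recognizing that the Leibniz formula is the wrong starting point: the parity of the numerator of $\det(A)$ cannot be controlled term-by-term when there are an even number of terms, since cancellations can leave an even residue. Switching to the Pfaffian resolves this exactly because $(2n-1)!!$ is odd, and after that switch the argument is elementary bookkeeping of odd integers, with the $\det = \mathrm{Pf}^2$ identity automatically upgrading the odd/odd conclusion for $\mathrm{Pf}(A)$ into the desired $a^2/b^2$ form for $\det(A)$.
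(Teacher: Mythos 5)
Your proof is correct, but it takes a genuinely different route from the paper's. The paper argues by induction on $n$: it peels off a $2\times 2$ block and applies the block-determinant (Schur complement) formula $\det(F)=\det(A)\det(D+C^\top A^{-1}C)$, then shows the Schur complement is again an odd skew-symmetric matrix because $C^\top A^{-1}C$ is an \emph{even} skew-symmetric matrix, so the inductive hypothesis applies. You instead go through the Pfaffian's expansion over perfect matchings and a single parity count: after clearing the odd denominator $Q=\prod_{i<j}b_{ij}$, the quantity $Q\cdot\mathrm{Pf}(A)$ is a signed sum of $(2n-1)!!$ odd integers, hence odd, hence nonzero. Both arguments are sound; the one point your write-up leaves implicit is the standard fact that the matching expansion of $\mathrm{Pf}$ and the identity $\det(A)=\mathrm{Pf}(A)^2$ are being taken as known, whereas the paper's induction is self-contained modulo the block-determinant formula (whose applicability rests on the top-left block being invertible, which holds since its off-diagonal entry is a nonzero odd rational). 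Your route is shorter, avoids induction, and directly establishes that $\mathrm{Pf}(A)$ is itself an odd rational --- a fact the paper only obtains as a remark after its proof; your observation that the Leibniz expansion fails because $(2n)!$ is even, while $(2n-1)!!$ is odd, is exactly the right reason the Pfaffian is the natural tool here.
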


\begin{proof}
    Before beginning the proof, we define an odd rational as a rational number that can be written as $\frac {a}{b}$ for both $a$ and $b$ odd integers; an odd matrix as a matrix containing only odd rationals; and an odd skew-symmetric matrix as a skew-symmetric matrix for which each off-diagonal entry is an odd rational. Similarly, define a rational number that can be written $\frac{a}{b}$ for $a$ even and $b$ odd, an even rational; a matrix containing only even rationals, an even matrix; and a skew-symmetric matrix for which each off-diagonal entry is an even rational, an even skew-symmetric matrix. Note that odd and even rationals add and multiply as expected.

    We will prove this lemma by induction on $n$. When $n=1$, the matrix is of the form:
    \begin{equation}\begin{bmatrix}
        0 & \frac{a}{b} \\
        \frac{-a}{b} & 0 
    \end{bmatrix}.\end{equation}
    This matrix's determinant is $\frac{a^2}{b^2}$, which is the square of an odd rational. Assume that the lemma is true for all $2n$-dimensional odd skew-symmetric matrices. Let $F$ be a $(2n+2)$-dimensional odd skew-symmetric matrix. We can rewrite $F$ in block form as:
    \begin{equation}F=
        \begin{bmatrix}
          A & C \\
          -C^\top & D  
    \end{bmatrix}\end{equation}
    For $A$ a $2\times 2$ matrix, $D$ a $2n\times 2n$ dimensional matrix, and $C$ a $2n\times 2$ matrix. $A$ and $D$ are both odd skew-symmetric matrices, and $C$ is an odd matrix. Let $A$ be given by     
    \begin{equation}A=\begin{bmatrix}
        0 & \frac{a}{b} \\
        \frac{-a}{b} & 0 
    \end{bmatrix}.\end{equation}
    As $A$ is invertible, we can apply the determinant formula for block matrices to $F$. This results in:    
    \begin{equation}\det(F)=\det(A)\det(D-(-C^\top )A^{-1}C)=\frac{a^2}{b^2}\det(D+(C^\top A^{-1}C))\end{equation}
So $\det(F)$ is a square of an odd rational, if $\det(D+(C^\top A^{-1}C))$ is, as the product of squares of an odd rational is a square of an odd rational. By induction, if $D+(C^\top A^{-1}C)$ is a $2n$-dimensional odd skew-symmetric matrix, its determinant is the square of an odd rational. As $A$ is skew-symmetric, so is $A^{-1}$; moreover:
\begin{equation}(D+(C^\top A^{-1}C))^\top =-(D+(C^\top A^{-1}C))\end{equation} 
Thus, as both $D+(C^\top A^{-1}C)$ and $(C^\top A^{-1}C)$ are skew-symmetric, it remains to show that all non-diagonal entries in $D+(C^\top A^{-1}C)$ are odd rationals.

    First, note that,
    \begin{equation}A^{-1}=\begin{bmatrix}
        0 &\frac{-b}{a} \\
        \frac{b}{a} & 0
    \end{bmatrix}\end{equation}
And, for some odd vectors, $\mathbf{V}$, and $\mathbf{W}$, let $C$ be given by:
\begin{equation}C=
\left[
\begin{array}{c}
\begin{tikzpicture}[baseline=(v.base)]
  \node[draw, minimum width=2.5cm, minimum height=0.1cm] (v) { \(\mathbf{V}^\top\) };
\end{tikzpicture}
\\[0.3cm]
\begin{tikzpicture}[baseline=(w.base)]
  \node[draw, minimum width=2.5cm, minimum height=0.1cm] (w) { \(\mathbf{W}^\top\) };
\end{tikzpicture}
\end{array}
\right]
\end{equation}
We can then define the $2n\times 2$ matrix $E$ as: 
    \begin{equation}E=A^{-1}C= \left[
\begin{array}{c}
\begin{tikzpicture}[baseline=(v.base)]
  \node[draw, minimum width=2.5cm, minimum height=0.1cm] (w) { \(\frac{-b}{a}\mathbf{W}^\top\) };
\end{tikzpicture}
\\[0.3cm]
\begin{tikzpicture}[baseline=(w.base)]
  \node[draw, minimum width=2.5cm, minimum height=0.1cm] (v) { \(\frac{b}{a}\mathbf{V}^\top\) };
\end{tikzpicture}
\end{array}
\right]
   \end{equation}
   
Each entry of $E$ is a product of two odd rationals. The first being either $\frac{-b}{a}$, or $\frac{b}{a}$, and the second an entry in either $\mathbf{V}$ or $\mathbf{W}$. Therefore, each entry, as a product of an odd rational and an odd rational, is an odd rational, and $E$ is an odd matrix. Note that the $ith$ column of $C^\top $ is a pair of odd rationals $(f_1,f_2)$ and the $j$th row of $E$ is also a pair of odd rationals $(g_1,g_2)$. The $(i,j)$th entry in $C^\top E$ is the dot product of these vectors, $f_1g_1+f_2g_2$. As a product of odd rationals is odd, this dot product is a sum of two odd rationals, which is an even rational. Therefore, $C^\top E$ is an even skew-symmetric matrix.
So, as the sum of an odd skew-symmetric matrix, $D$, and an even skew-symmetric matrix, $C^\top E$, is an odd skew-symmetric matrix, by induction, $\det(D+(C^\top A^{-1}C))$ is the square of an odd rational. Therefore, $F$ is the product of squares of odd rationals and thus a square of an odd rational, and the lemma is proven.
    \end{proof}

Note that the square root of the determinant for skew-symmetric matrices is called the Pfaffian, and the above shows that under these assumptions, the Pfaffian is always an odd rational.

\section{Alternative Imbalance Definitions} \label{seinfeldunfairsect}
The above definitions of imbalance are far from the only viable definitions and do not easily extend to games with more than two players. The following alternative definitions provide different metrics and partial orders for the imbalance of certain types of games. We can think of these various metrics in the same way we consider different measures of inequality in a wealth distribution of a population. For instance, in a perfectly equal population, we would expect wealth to be uniformly distributed. We can measure our deviation from this uniformity by using metrics such as the GINI coefficient, the coefficient of variation, or Theil's entropy index \cite{Cowell2000_MeasuringInequality}\cite{NinoZarazua2017_GlobalInequality}. We can also see that broadly these imbalances decrease as the number of pure strategies in the games increases, showing that attempting to reduce imbalance overall may cause evolutionary pressure. For instance, a model ecological game with $m$ strategies may evolve into one with more strategies while remaining playable to reduce the imbalance of the game. For this reason, we consider the maximization and minimization of imbalance when restricted to the set of $RPS$ games on a specified number of objects, e.g., the most imbalanced $n$-$RPS$ as opposed to the most imbalanced $RPS$ as a whole. 

Here, we provide some alternative definitions of imbalance that are similar to some of these measures of inequality and extend the above $RPS$ definition to a more general class of symmetric zero-sum games. 

\subsection{Uniform Strategy Imbalance}

Ideally, in a balanced $RPS$ game, there is a Nash equilibrium in which each player chooses each object uniformly at random. In such a game, if all players other than $p$ are playing a uniformly mixed strategy over the $n$ pure strategies, $p$'s expected payoffs for each object should be zero. We can then define:

\begin{defi}[Uniform Expected Payoff]
The uniform expected payoff for player $i$ in a game $G$ is the expected payoff for each pure strategy of player $i$, given that all other players are playing a uniform probability distribution over their set of pure strategies. Therefore, this is only defined if the set of pure strategies for each player is compact. Alternatively, we can define the uniform expected payoff distribution for $i$ by considering these expected values as a distribution giving equal weight to each object's uniform expected payoff.
\end{defi}

For symmetric games, this result is the same for any player. For asymmetric games, we will, in general, use the weighted sum uniform expected payoff to define our imbalance statistic, i.e., the imbalance statistic is based on weighing the statistic based on uniform expected payoff for each player equally. The idea behind this is that the more players for whom the game is imbalanced, the more imbalanced the game. We refer to an imbalance statistic based on the uniform expected payoff as a uniform strategy imbalance. For the sake of clarity, we only show the definitions for symmetric games below.

This statistic applied to an $n$-$RPS$ provides each pure strategy, $i$, with $e_{in,i}$ incoming edges and $e_{out,i}$ outgoing edges a payoff of: 
\begin{equation}
    \frac{1}{n-1} (e_{in,i}-e_{out,i})
\end{equation}  
This directly extends the difference between incoming and outgoing edges in the combinatorial imbalance defined above in definition \ref{combimb}. For this purpose, we consider these forms of imbalance combinatorial rather than game-theoretic. The uniform expected payoff statistic leads to the following several different measures of imbalance:

Firstly, we can define the variance of uniform expected payoffs as a random variable over the set of expected payoffs. As opposed to a balanced game where the variance is $0$, in a maximally imbalanced game, this variance should be maximized. In analogy to the imbalance in wealth distribution, this can be considered similar to the coefficient of variation, which measures the standard deviation of the wealth distribution \cite{NinoZarazua2017_GlobalInequality}. More precisely, we define:

\begin{defi}[Uniform Imbalance of Variance]
For two symmetric games, $G_1$ and $G_2$, $G_1$ is more uniformly imbalanced in variance, referred to as $UI_v$, than $G_2$ if the variance of the uniform expected payoffs in $G_1$ is higher than the variance of the uniform expected payoffs in $G_2$. 
\end{defi}

The partial ordering of imbalance defined by $UI_v$ is equivalent to the combinatorial imbalance definition above for two-player $n$-$RPS$ games. For this reason, $UI_v$ is a reasonable choice of imbalance definition for any game whose sets of strategies are compact. Another form of imbalance based on the uniform expected payoff has to do with maximizing the entropy as a probability distribution over expected payoffs; as in the balanced case, this entropy is minimized. More precisely, we can define:

\begin{defi}[Uniform Imbalance of Entropy]
 For two symmetric games, $G_1$ and $G_2$, $G_1$ is more uniformly imbalanced in entropy, referred to as $UI_e$, than $G_2$ if the entropy of the uniform expected payoffs as a random variable in $G_1$ is higher than the entropy of the uniform expected payoffs in $G_2$. Where the entropy, $H$, of a probability distribution, $P$, over a measurable set $D$, is defined as \cite{Shannon1948MathematicalTheoryCommunication}:
 \begin{equation}
     H(P)=-\int_D P(x)\ln(P(x))dx
 \end{equation}
\end{defi}

A final form of imbalance, based on the uniform expected payoff, involves maximizing a form of the Theil-T index of the resulting function \cite{Theil1967_EconomicsAndInformationTheory}, as opposed to its variance as a probability distribution over expected payoffs. To do so, we must standardize the expected payoffs so that the natural logarithm of an expected payoff is well-defined. For this purpose, we have the definition:  

\begin{defi}[\alpha-Thiel Entropy Imbalance ]
For a symmetric game, $G_1$ with bounded payoffs, define $\tilde{G}$ as a game in which we scale the payoffs of each outcome in $G$ by a positive constant $c_1$ and then add a constant $c_2$ to each payoff, so that the uniform expected payoffs has mean $1$ and infimum value $0<\alpha<1$. Note that this transformation does not change the Nash Equilibria from $G_1$ to $\tilde{G}_1$. 
For two games with bounded payoffs, $G_1$, $G_2$, $G_1$ is more uniformly imbalanced in $\alpha$-Theil $T$ entropy, referred to as $UI_{t_\alpha}$, than $G_2$ if the Thiel-T index of the random distribution over uniform expected payoffs for $\tilde{G_1}$ is higher than the Thiel-T index over $\tilde{G}_2$. Where we define the Thiel-T index of a distribution, $P(x)$ over positive values with mean $1$ as:
\begin{equation}
    T(P)=\int_0^\infty P(x) x\ln(x) dx
\end{equation}
Over discrete random variables defined on a set $D$, with probability distribution $P(x)$ as:
\begin{equation}
    T(P)=\sum_{i\in D} P(x_i)x_i\ln(x_i)
\end{equation}
\end{defi}
Note that over a maximally balanced game, the Theil-T statistic is minimized at $\alpha\ln(\alpha)$ as expected.

\subsubsection*{The Schur-Uniform Class of Imbalances for $RPS$ Games}

There are many other definitions of imbalance that we can define based on uniform expected payoff. In two-player $(2n+1)$-$RPS$ games, the uniform expected payoff given by an object $i$ is a positive proportion of the incoming edges to $i$, $e_{in,i}$, subtracted by the outgoing edges from $i$, $e_{out,i}$. As these sum up to $2n$, we can set the uniform expected payoff as a function of either the sequence of incoming edges, $\mathbf{e}_{in}$, or outgoing edges, $\mathbf{e}_{out}$. We can then consider the following class of imbalance functions:

\begin{defi}[Schur-Uniform Class of Imbalances over $RPS$ games]
    A uniform strategy imbalance, $\tau$, is in the Schur-uniform class, $S_u$, if:

    \begin{enumerate}
        \item $\tau$ is definable over an $n$-$RPS$ game for any $n\in \N$.
        \item  When $\tau$ is applied to $n$-$RPS$, $G_1$, and $G_2$ with sequences of incoming edges $\mathbf{e}_{G_1,in},\mathbf{e}_{G_2,in}$ respectively, if $\mathbf{e}_{G_1,in}$ majorizes $\mathbf{e}_{G_2,in}$ then $G_1$ is greater than or equally imbalanced as $G_2$.
    \end{enumerate}
    If, when $\mathbf{e}_{G_1,in}$ strictly majorizes $\mathbf{e}_{G_2,in}$, $G_1$ is strictly more imbalanced than $G_2$ we say that said imbalance definition is in the strict $S_u$ class.
\end{defi}

Before defining majorization, we must first define for $\mathbf{x},\mathbf{y}\in \R^d$, $x_{i}^{\downarrow }$ as the $i$th largest index of $\mathbf{x}$, and similarly define $y_{i}^{\downarrow }$. To simplify notation, we also define the reverse ordered sequence of $\mathbf{x}$, $\mathbf{x}^\downarrow$, as the sequence $(x_{1}^{\downarrow },x_{2}^{\downarrow }\dots)$, and the ordered sequence of $\mathbf{x}$, $\mathbf{x}^\uparrow$, as $(x_{1}^{\uparrow },x_{2}^{\uparrow }\dots)$, for $x_{i}^{\uparrow }$ the $i$th smallest index in $\mathbf{x}$.
Then $\mathbf{x}$ majorizes $\mathbf{y}$ if both of the following hold \cite{MarshallOlkinArnold2011Majorization2e}:
        \begin{equation}
            \sum _{i=1}^{d}x_{i}= \sum _{i=1}^{d}y_{i}
        \end{equation}
And for all natural numbers $k<d$,
        \begin{equation}
             \sum _{i=1}^{k}x_{i}^{\downarrow }\geq \sum _{i=1}^{k}y_{i}^{\downarrow }
        \end{equation}

While $UI_v$ and $UI_{t_\alpha}$ act as different partial orders on games' level of imbalance, if two games are in the $S_u$ class, then on the most imbalanced form of $RPS$ games, they agree.  We can then show the following:

\begin{lem}
    $UI_v$ is in the strict $S_u$ class, and $UI_{t_\alpha}$ is in the $S_u$ class.
\end{lem}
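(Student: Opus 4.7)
The plan is to reduce both claims to the Hardy--Littlewood--P\'olya (HLP) theorem: if $\mathbf{x}, \mathbf{y} \in \R^d$ share the same coordinate sum, then $\mathbf{x}$ majorizes $\mathbf{y}$ if and only if $\sum_i \phi(x_i) \geq \sum_i \phi(y_i)$ for every continuous convex $\phi$, with strict inequality whenever $\phi$ is strictly convex and $\mathbf{x}^\downarrow \neq \mathbf{y}^\downarrow$. Since $\sum_i e_{in,i} = \binom{n}{2}$ is constant across all $n$-$RPS$ games, the majorization condition on in-degree sequences fits this framework directly.

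For $UI_v$, first observe that the uniform expected payoff $p_i = \tfrac{2 e_{in,i}}{n-1} - 1$ is an affine function of $e_{in,i}$, so the payoff mean is always $0$ and $\mathrm{Var}(\mathbf{p}) = \bigl(\tfrac{2}{n-1}\bigr)^2 \mathrm{Var}(\mathbf{e}_{in})$. Applying HLP with the strictly convex $\phi(x) = x^2$ yields that strict majorization of $\mathbf{e}_{G_1,in}$ over $\mathbf{e}_{G_2,in}$ forces $\mathrm{Var}(\mathbf{e}_{G_1,in}) > \mathrm{Var}(\mathbf{e}_{G_2,in})$ and hence $\mathrm{Var}(\mathbf{p}(G_1)) > \mathrm{Var}(\mathbf{p}(G_2))$. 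This places $UI_v$ in the strict $S_u$ class.

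For $UI_{t_\alpha}$, write the rescaled payoffs as $\tilde{p}_i = A(G)\, e_{in,i} + B(G)$, where $A(G), B(G)$ are determined by the constraints $\sum_i \tilde{p}_i = n$ and $\min_i \tilde{p}_i = \alpha$, and express the Theil-$T$ index as $T(\tilde{\mathbf{p}}) = \tfrac{1}{n} \sum_i \phi(\tilde{p}_i)$ with the convex function $\phi(x) = x \ln x$. The cleanest route is to show that $\tilde{\mathbf{p}}(G_1)$ majorizes $\tilde{\mathbf{p}}(G_2)$ and then invoke HLP. The main obstacle is that $A(G), B(G)$ depend on $G$ through $e_{in,\min}$, so majorization on the in-degree vectors does not automatically lift to the rescaled payoffs. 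To handle this I would decompose the majorization into elementary Robin Hood transfers and split into two cases: transfers that do not touch the minimum entry leave $A, B$ fixed and yield the HLP conclusion verbatim, while transfers that shift $e_{in,\min}$ require a direct one-variable calculus argument using the explicit form of $A(G)$ together with the convexity of $\phi$ to check that $T$ varies weakly in the correct direction. I expect only weak monotonicity in this second case, which is consistent with the lemma placing $UI_{t_\alpha}$ in the non-strict $S_u$ class rather than the strict one.
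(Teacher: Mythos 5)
Your $UI_v$ argument is correct and is essentially the paper's argument in different clothing: the paper verifies that $\sum_i(2e_{in,i}-2n)^2$ is symmetric and convex, hence Schur-convex, and obtains strictness by computing the effect of a single transfer $e_{in,i}\mapsto e_{in,i}+1$, $e_{in,j}\mapsto e_{in,j}-1$; your appeal to Hardy--Littlewood--P\'olya with $\phi(x)=x^2$ proves the same thing, since the in-degree sum is fixed across $n$-$RPS$ games and the affine rescaling factor $2/(n-1)$ does not depend on the game.

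The $UI_{t_\alpha}$ half is where the gap is. You have correctly identified the real difficulty --- the normalization constants (your $A(G),B(G)$, the paper's $c_1,c_2$) depend on $G$ through the minimum uniform expected payoff, so Schur-convexity of $x\mapsto x\ln x$ does not directly transfer majorization of $\mathbf{e}_{in}$ to an inequality of Theil indices --- but your proposal does not close it. The case you defer, a transfer that lowers $e_{in,\min}$, is the generic case whenever the two games' minimum in-degrees differ (in particular it arises when comparing a typical playable game to the maximally imbalanced one, whose minimum in-degree is as small as possible), and it is not a routine one-variable check: after renormalization both payoff vectors have mean $1$ and minimum $\alpha$, and their top-$k$ partial sums compare as $c_1(G_1)\sigma_k(G_1)+k$ versus $c_1(G_2)\sigma_k(G_2)+k$, where $\sigma_k$ denotes the sum of the $k$ largest unnormalized (mean-zero) payoffs, so that $\sigma_k(G_1)\geq\sigma_k(G_2)\geq 0$ but $c_1(G_1)\leq c_1(G_2)$. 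The product can a priori go either way, so majorization of the normalized payoffs is not automatic, and neither is the Theil inequality; you need an actual estimate here, not an expectation that it ``varies weakly in the correct direction.'' For what it is worth, the paper's own proof does not confront this either: it fixes ``some $c_1,c_2$'' and differentiates only in the $e_{in,i}$, silently treating the normalization as game-independent. So your instinct about where the difficulty lies is sharper than the written proof's, but as submitted your argument for $UI_{t_\alpha}$ is incomplete at exactly that point.
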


\begin{proof}
    As we saw above, uniform expected payoff in $(2n+1)$-$RPS$ games provides each object a payoff of $\frac{1}{2n}(e_{in,i} -e_{out,i})$. As the coefficient $\frac{1}{2n}$ is the same over all $(2n+1)$-$RPS$ games, we can ignore this coefficient and consider the uniform imbalance statistics defined on the variables $e_{in,i}$ with expected payoff for object $i$ equal to $(e_{in,i} -e_{out,i})$, which as $e_{out,i}=2n-e_{in,i}$ is equivalent to  $(2e_{in,i}-2n)$.

    For $UI_v$, this corresponds to maximizing the variance in $(2e_{in,i}-2n)$ over the objects $i$ with equal probability. As the average of this statistic is $0$ this is equivalent to maximizing $\sum_{i=1}^{2n+1}(2e_{in,i}-2n)^2$. As the mixed second-order partial derivative of this function with respect to $e_{in,i}$ and $e_{in,j}$ is $0$, and the non-mixed second-order partial derivative with respect to $e_{in,i}$ is $2$, this function is convex. As it is also symmetric over each variable of $e_{in,i}$, it is Schur-convex and therefore in $S_u$ \cite{MarshallOlkinArnold2011Majorization2e}.
    
    To see it is in the strict $S_u$ class, note that we can take the partial difference of this with respect to increasing $e_{in,i}$ by $1$ and decreasing $e_{in,j}$ by $1$, as the sum of all incoming edges must be constant:
    \begin{equation}
    \begin{split}
       (2(e_{in,i}+1)-2n)^2-(2e_{in,i}-2n)^2+(
        (2(e_{in,j}-1)-2n)^2-(2e_{in,j}-2n)^2=\\ 
        8\left(e_{in,i}-n+\frac{1}{2}\right)-8\left(e_{in,j}-2n+\frac{1}{2}\right)=\\
        8(e_{in,i}-e_{in,j}+1)
    \end{split}
    \end{equation}
    Thus increasing any $e_i$ that is larger than or equal to some $e_j$ by $1$, and correspondingly decreasing $e_j$ by $1$, increases the $UI_v$ statistic by $8\left(e_{in,i}-e_{in,j}+1\right)$. As the larger these $e_{in,i}$ are, the larger this increase is, and as it is symmetric for each $e_{in,i}$, we can see that a sequence of $e_{in,i}$ that strictly majorizes another will have a strictly larger $UI_v$ statistic.

    For $UI_{t_\alpha}$, we can see that for some $c_1,c_2$ this is equivalent to maximizing:
    \begin{equation}
        \sum_{i=1}^{2n+1}c_1( 2e_{in,i}-2n+c_2)\ln (c_1( 2e_{in,i}-2n+c_2))
    \end{equation}
    We can see that this is symmetric in $e_{in,i}$ and that the mixed second-order partial derivatives of this function with respect to $e_{in,i}$ and $e_{in,j}$ are $0$. The non-mixed second-order partial derivative with respect to $e_{in,i}$ is:
    \begin{equation}
    \frac{4c_1^2}{c_1( 2e_{in,i}-2n+c_2)}
    \end{equation}
    As both the numerator and denominator of this function are always positive by construction, this function is convex. Therefore, this function is Schur-convex and thus maximized by a majorized sequence of variables \cite{MarshallOlkinArnold2011Majorization2e}. Therefore, this imbalance is in $S_u$. 
    \end{proof}
    
In section \ref{unfairsect}, we show that the given imbalanced game's sequence of incoming edges strictly majorizes each other's over playable $RPS$ games, showing that it maximizes $S_u$ imbalance definitions. We will further show that this $RPS$ game also maximizes $UI_e$ among playable games.

\subsection{Nash Probability Imbalance}

Ideally, in a balanced $RPS$ game, there is a Nash equilibrium in which each player chooses each object uniformly at random. Therefore, a measure of deviation of a Nash equilibrium from the uniform distribution defines a form of imbalance. We call these forms of imbalance Nash probability imbalances. As mentioned above, when selecting the Nash equilibrium to define the imbalance of the game, we choose the one that minimizes the imbalance. We call these the worst-case distributions. The most classic consideration that we can choose is to minimize the entropy, giving the definition:

\begin{defi}[Minimal Nash Equilibria Entropy Imbalance]
     Given two $m$-player games, $G_1$, and $G_2$, $G_1$ is more Nash equilibria entropy imbalanced, referred to as $N_e$, than $G_2$ if in the Nash equilibrium, which maximizes the sum of the entropies for all players of $G_1$, and $G_2$, $n_1$, $n_2$ respectively, $H(n_1)<H(n_2)$. Where $H(n_i)$ is the sum of the entropy of each distribution of each player in the set of distributions $n_i$. For instance, if $P_j(x)$ is the distribution over pure strategies for player $p_j$ in $n_i$, then given $m$ players, $p_1 \dots p_m$, each with their own strategy set $D_j$: 
     
     \begin{equation}
         H(n_i)=-\sum_{j=1}^m \int_{D_j} P_j(x)\ln(P_j(x))dx
     \end{equation} 
\end{defi}
Note that over a maximally balanced game containing a uniform distribution Nash equilibrium for each player, this entropy statistic is maximized.

Another definition of imbalance that applies to symmetric discrete games comes from the fact that in an imbalanced game, we would expect more of the players to choose the same pure strategy. This leads us to the extension of distributional imbalance mentioned above:

\begin{defi}[Maximal Nash Equilibria Ties Imbalance]
     Given two $m$-player symmetric discrete games, $G_1$, and $G_2$, $G_1$ is more Nash maximal ties imbalanced, $N_t$, than $G_2$ if in the symmetric Nash equilibrium, which minimizes ties for $G_1$, and $G_2$, $n_1$, $n_2$ respectively, $T(n_1)>T(n_2)$, where $T(n_i)$ is the expected number of ties when players play the set of strategies $n_i$ over $d$ pure strategies. For instance, if $v_{o}$ is the probability of a player playing $o$ in the Nash equilibrium $n_i$ then, 
     \begin{equation}
         T(n_i)=\sum_{o=1}^d {v_o}^m
     \end{equation}   
\end{defi}     
Note that all symmetric games have a symmetric Nash equilibrium \cite{nashnperson}, and that in a uniform distribution for each player, the number of ties is minimized over all symmetric distributions.

\subsubsection*{The Schur-Distributional Class of Imbalances for $RPS$ Games}

There are many other definitions of imbalance that we can define based on Nash equilibrium probability distributions. Just as for the majorization uniform imbalance class, we can also consider the following class of imbalance functions:

\begin{defi}[Schur-Distributional Class of Imbalances over $RPS$ games]
    A Nash probability imbalance, $\tau$, is in the Schur-distributional class, $S_N$, if:

    \begin{enumerate}
        \item $\tau$ is definable over a $n$-$RPS$ game for any $n\in \N$
        \item  When $\tau$ is applied to  $n$-$RPS$s, $G_1$, and $G_2$ with worst-case Nash equilibria probabilities for player $j$ playing object $o$ of $n^j_{G_1,o}$ and $n^j_{G_2,o}$ respectively, if the sequence of all such player-object probabilities for $G_1$ majorizes that sequence for $G_2$ then $G_1$ is greater than or equally imbalanced as $G_2$.
    \end{enumerate}
    If, when a sequence of probabilities for $G_1$ strictly majorizes $G_2$'s sequence of probabilities implies $G_1$ is more imbalanced than $G_2$, we say that said imbalance definition is in the strict $S_N$ class.
\end{defi} 

While $N_e$ and $N_t$ are different forms of imbalance that may disagree on certain games, we will show that these two forms of imbalance are both in $S_N$ and therefore agree on the most imbalanced $RPS$ game, more precisely:

\begin{lem}
    $N_t$ is in the strict $S_N$ class, and $N_e$ is in the $S_N$ class.
\end{lem}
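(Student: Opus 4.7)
The plan is to mirror the argument given for the Schur-uniform class: reduce both statements to the Schur-convexity of an explicit function of the distribution, and then verify Schur-convexity via the Hessian criterion.

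First, I would invoke lemma \ref{RPS-is-trivial} to eliminate any ambiguity about ``worst-case'' Nash equilibria on RPS games: a playable $n$-$RPS$ has a unique, symmetric, totally mixed Nash equilibrium described by a single probability vector $\mathbf{v}$ on the $n$ objects. The $(j,o)$-indexed sequence used in the $S_N$ definition is then just $m$ copies of $\mathbf{v}$ (one per player), and both the total sum and the sorted partial sums of this extended sequence are determined by those of $\mathbf{v}$. It follows that the extended sequence for $G_1$ (strictly) majorizes the extended sequence for $G_2$ if and only if $\mathbf{v}^{G_1}$ (strictly) majorizes $\mathbf{v}^{G_2}$. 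This lets me work directly with the single-player distribution.

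For $N_t$, I would view $T$ as the function $T(\mathbf{v})=\sum_{o=1}^{n}v_o^{m}$ on the relative interior of the simplex. The Hessian has vanishing off-diagonal entries and diagonal entries $m(m-1)v_o^{m-2}>0$ for $m\geq 2$ and $v_o>0$, so $T$ is strictly convex. Since $T$ is also symmetric in its arguments, it is strictly Schur-convex, and hence strict majorization of $\mathbf{v}^{G_1}$ over $\mathbf{v}^{G_2}$ produces the strict inequality $T(n_1)>T(n_2)$ demanded by the strict $S_N$ class. For $N_e$, I would instead work with $-H(\mathbf{v})=\sum_{o=1}^{n}v_o\ln v_o$. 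Again the Hessian is diagonal, with entries $1/v_o>0$ on the open simplex, so $-H$ is convex and symmetric, hence Schur-convex; therefore majorization of $\mathbf{v}^{G_1}$ over $\mathbf{v}^{G_2}$ gives $-H(n_1)\geq -H(n_2)$, i.e., $H(n_1)\leq H(n_2)$, which by definition says $G_1$ is at least as $N_e$-imbalanced as $G_2$.

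The only real obstacle is the bookkeeping in the first step: verifying that majorization of the $(j,o)$-indexed extended sequence is equivalent to majorization of the underlying single-player distribution, and that strict majorization transfers across this reduction. I expect this to follow cleanly by writing out the sorted extended sequence (each value of $\mathbf{v}^\downarrow$ repeated $m$ times) and comparing partial sums at indices that are multiples of $m$, where the partial sums coincide with $m$ times the partial sums of $\mathbf{v}^\downarrow$; strict inequality at any such index is preserved. After that, the two Schur-convexity arguments are essentially parallel to the ones already carried out for $UI_v$ and $UI_{t_\alpha}$ in the preceding lemma, and no new machinery is required.
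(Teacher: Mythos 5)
Your proposal is correct and follows essentially the same route as the paper: both reduce the claim to Schur-convexity of $\sum_o v_o^m$ and Schur-concavity of the entropy via symmetry plus a convexity check (the paper uses a direct transfer argument for $N_t$ where you use the Hessian, but these are interchangeable). Your explicit reduction of the $(j,o)$-indexed sequence to the single-player distribution via lemma \ref{RPS-is-trivial} is a bookkeeping step the paper leaves implicit, and is a welcome clarification rather than a divergence.
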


\begin{proof}
    To show that this is the case for $N_t$, note that we are maximizing:     
    \begin{equation}
    \sum_i^{2n+1}(P(o_i)^2)
    \end{equation}
   The partial derivatives of this with respect to any variable $P(o_i)$ is $2P(o_i)$. Therefore, we can see that as all $P(o_i)$ are non-zero in this Nash equilibrium, increasing the largest $P(o_i)$ at the cost of decreasing a smaller $P(o_j)$ increases the proportion of ties.

   To show that this is the case for $N_e$, note that we are minimizing:
    \begin{equation}
    -\sum_i^{2n+1}(P(o_i)\ln(P(o_i))
    \end{equation}
    The mixed second-order partial derivative of this function with respect to $P(o_i)$ and $P(o_j)$ is $0$, and the non-mixed second-order partial derivatives with respect to $P(o_i)$ are:
    \begin{equation}
        -\frac{1}{P(o_i)}
    \end{equation}
    As this is strictly negative, when well-defined, this function is concave. Therefore, as this function is concave and symmetric with respect to $P(o_i)$, it is Schur-concave and thus $N_e$ is in the $S_N$ imbalance class \cite{MarshallOlkinArnold2011Majorization2e}. 
\end{proof}

In section \ref{unfairsect}, we show that our given imbalanced $(2n+1)$-$RPS$'s sequence of Nash equilibrium object probabilities majorizes all other Nash equilibrium object probabilities over all playable $RPS$ games.

\subsection{Uniform Imbalance on an $RPS$ with Countably Infinite Strategies} \label{uncountablesect}

For a non-compact set of strategies, we can still define the Nash probability imbalances, but the uniform strategy imbalances cannot be well-defined. To counteract this and define a combinatorial form of imbalance on an $\N$-$RPS$, we provide the following extension of weak majorization for infinite sequences over the negatively extended real numbers, $\bar{\R}=\R\cup\{-\infty\}$. We then show that over $RPS$ games, this directly extends the finite $UI_v$/combinatorial imbalance definition.

\begin{defi}[Weak Majorization over Infinite Sequences of Negatively Extended Real Numbers]
    Given an infinite sequence of negatively extended real numbers, $\bar{\R}=\R\cup \{ -\infty\}$, $\mathbf{x}$, as before define $x_{i}^{\downarrow }$ as the $i$th largest index of $\mathbf{x}$, and similarly define $y_{i}^{\downarrow }$. Then $\mathbf{x}$ weakly majorizes $\mathbf{y}$ if:
    \begin{enumerate}
        \item The number of indices of  $\mathbf{x}$ which are -$\infty$ is the less than or equal to that of $\mathbf{y}$
        \item If the number of indices of  $\mathbf{x}$ which are -$\infty$ are the same as that of $\mathbf{y}$, then if for all $k\in \N$:
        \begin{equation}
            {\displaystyle \sum _{i=1}^{k}x_{i}^{\downarrow }\ge \sum _{i=1}^{k}y_{i}^{\downarrow }}
        \end{equation}
    \end{enumerate}
    We say $\mathbf{x}$ weakly majorizes $\mathbf{y}$ after $k_0\in \N$ if the number of indices of $\mathbf{x}$ which are -$\infty$ are the same as that of $\mathbf{y}$, and if for all $k>k_0$:
\begin{equation} 
    {\displaystyle \sum _{i=1}^{k}x_{i}^{\downarrow }\geq \sum _{i=1}^{k}y_{i}^{\downarrow }}
\end{equation}  
\end{defi}

We further define the minimal degree sequence of a tournament as:
\begin{defi}[Minimal Degree Sequence of a Tournament]
    Given a labeled tournament, $G$, for each vertex $i$, let $e_{min,i}=min(e_{in,i},e_{out,i})$. Then the sequence of $\mathbf{e}_{G,min}=(e_1,e_2\dots)$ is referred to as the minimal degree sequence of $G$.
\end{defi}
We can then define the uniform imbalance on an $\N$-$RPS$ as: 

\begin{defi}[Uniform Imbalance on an $\N$-$RPS$]
    For two $\N$-$RPS$s, $G_1$, and $G_2$, with minimal degree sequences, $\mathbf{e}_{G_1,min},\mathbf{e}_{G_2,min}$ respectively, if $-\mathbf{e}_{G_1,min}$ majorizes $-\mathbf{e}_{G_2,min}$ then $G_1$ is equally or more imbalanced than $G_2$. Furthermore, we say that $G_1$ is equally or more imbalanced then $G_2$ in the limit if there is some $k_0\in \N$ such that $-\mathbf{e}_{G_1,min}$ majorizes $-\mathbf{e}_{G_2,min}$ after $k_0$. Strict uniform imbalance is defined by if when the majorization is strict, then $G_1$ is strictly more imbalanced than $G_2$. Similarly, we define strict uniform imbalance in the limit. 
\end{defi}

While complicated, this definition is based upon the fact that in a balanced $\N$-$RPS$, each object, $i$, has $e_{in,i}=e_{out,i}=+\infty$. Therefore, in an imbalanced $RPS$, each object should have one of $e_{in,i},e_{out,i}$ small, and the other infinite. This form of weak majorization is a method for finding the $RPS$ with the largest number of small $e_{min,i}$ indices. More precisely we can show how the above definition extends $UI_v$ to $\N$-$RPS$ games:

\begin{proof}
    The most imbalanced $n$-$RPS$ maximizes the variance of the difference between outgoing and incoming edges. For sequences with only finite terms, the average difference between outgoing edges and incoming edges is zero, as every edge is incoming in one direction and outgoing in the other. Therefore, for objects $1\dots n$ the variance is:
    
    \begin{equation}
    \label{N-unfairness}
        \frac{1}{n}\sum_{i=1}^{n} (e_{in,i}-e_{out,i})^2 
    \end{equation}
    As $e_{in,i}+e_{out,i}=n-1$,
    \begin{equation}
        (e_{in,i}-e_{out,i})^2=((n-1)-2e_{in,i})^2=((n-1)-2e_{out,i})^2=((n-1)-2e_{min,i})^2
    \end{equation}
    As we are comparing sums, we can ignore the coefficient $\frac{1}{n}$. Expanding out the sum:    
    \begin{equation} \sum_{i=1}^{n} ((n-1)-2e_{min,i})^2=n\cdot (n-1)^2-4(n-1)\cdot \sum^n_{i=1}(e_{min,i}) + 4\sum^n_{i=1}(e_{min,i})^2\end{equation}    
    We can ignore the $n\cdot (n-1)^2$ contribution, as it is the same contribution in all games with $n$ objects. Therefore, on the minimal sequence $e_{min,i}$, we are maximizing: 
\begin{equation}\lim_{n\rightarrow \infty}-4(n-1)\cdot \sum^n_{i=1}(e_{min,i}) + 4\sum^n_{i=1}(e_{min,i})^2\end{equation} 
If we take the partial derivatives of this sum with respect to $e_{min,i}$, we obtain $-(n-1)+2e_{min,i}$. Fixing $i$ as $n$ approaches infinity, we have $e_{min,i}\leq\frac{n-1}{2}$, and all partial derivatives are non-positive, while the non-mixed second-order partial derivatives are positive. Therefore, the smaller these $e_{min,i}$ are, the more imbalanced the game. We can also show that $\sum_{i=1}^n-(n-1)(e_{min,i})$ dominates over $\sum^n_{i=1}(e_{min,i})^2$ in the limit. As, for any $\epsilon>0$, for every finite $e_{min,i}$ there is some $m$ such that, for all $n>m$, 
    \begin{equation}
        -(n-1)(e_{min,i})(1+\epsilon )<-(n-1)(e_{min,i})+(e_{min,i})^2<-(n-1)(e_{min,i})(1-\epsilon )
    \end{equation} 
So maximizing the sum \eqref{N-unfairness} is equivalent to maximizing $-\sum^n_{i=1} e_{min,i}$, which is equivalent to minimizing $\sum^n_{i=1} e_{min,i}$. For two games with the same amount of non-infinite $e_{min,i}$, this motivates that the game with the smallest sum of the given sequence of $e_{min,i}$ is more imbalanced. 

To extend this to infinite sequences of terms, if a game has more infinite terms than another or fewer finite terms, it should be considered more balanced, as the objects corresponding to these infinite $e_i$ are maximally balanced, and the objects corresponding to the finite $e_i$ are at least somewhat imbalanced. In the case that the number of infinite terms is the same, and there are $d\in \N$ finite terms, we compare $\sum^k_{i=1} e_{min,i}^\uparrow$ as above. On the other hand, if there are infinite finite terms, we have: 
\begin{equation}
    \sum^\infty_{i=1} e_{min,i} =\infty
\end{equation}
Therefore, to compare games $G_1$ and $G_2$, we can consider the difference of the contribution of the least balanced element of $G_1$, in equation \eqref{N-unfairness}, with the least balanced element of $G_2$. So, we can construct the ordered minimal degree sequences for both games $\mathbf{e}_{G_1}^\uparrow$ and $\mathbf{e}_{G_2}^\uparrow$, and then take the difference of each term in the sum, i.e.,  
    \begin{equation}
        \sum^\infty_{i=0}(e_{G_1,i}^\uparrow-e_{G_2,i}^\uparrow)
    \end{equation}
    We can thus conclude that $G_1$ is more imbalanced than $G_2$, if for all $k\in \N$ the partial sums from $0$ to $k$ are negative. On the other hand, if there is some $k_0 \in \N$ such that for all $k>k_0$, the partial sum from $0$ to $k$ is negative, we say that this game is imbalanced in the limit. This limit case implies that for all $k>k_0$, the $k$ least balanced objects in $G_1$ are less balanced (contribute less to equation \eqref{N-unfairness}) than the $k$ least balanced objects in $G_2$. If there is some $k_0$ such that all greater partial sums are positive, then the reverse is true. However, if there is no such $k_0$, then the games should be considered incomparable.
\end{proof}

 The limit imbalance consideration allows us to compare more games while only ignoring a finite number of imbalanced elements. 

\section{Imbalanced RPS Constructions}\label{imbalanced}
It is easy to see that having a single object that always wins would be unplayable. In that vein, consider a $(2n+1)$-$RPS$ containing an object, $r_1$, that beats all objects other than $p_1$. Note that $p_1$ must lose to every other object, as if it beats object $s'$, then $s'$ is weakly dominated by $r_1$. As a weakly playable $(2n+1)$-$RPS$ is strongly playable, this cannot occur. 

To maximize the imbalance of the remaining objects' interactions, label our next object that is imbalancedly winning as $r_2$. Just like for $r_1$, if $r_2$ beat all remaining objects, it would weakly dominate them, and that would make the game unplayable. Therefore, there needs to be an object from the remaining choices that beats $r_2$, which we will denote as $p_2$. The same property as for $p_1$, namely being forced to lose to the remaining objects, is present for $p_2$. We can continue this process $n$ times to define the interactions between $2n$ objects. However, as there must be an odd number of objects, we include a final object, $s$, whose interactions with all other objects are well-defined by the construction, e.g., loses to all $r_i$ and beats all $p_i$. We can then define our potentially very imbalanced game:

\begin{defi}[Imbalanced $(2n+1)$-$RPS$]
    This $(2n+1)$-$RPS$ game is defined on a set of vertices $r_1\dots r_n,p_1,\dots p_n,s$. Defined recursively, $r_i$ beats all elements $r_j,p_k,s$ for $i>j,i>k$, $p_i$ loses to all elements $r_j,p_k,s$ for $i>j,i>k$, and $p_i$ beats $r_i$. This implies that $s$ loses to all $r_i$ and beats all $p_i$, and therefore can be thought of as equivalent to both $p_{n+1}$ and $r_{n+1}$. 
\end{defi}

    This construction easily generalizes as an $\N$-$RPS$. 
    
    \begin{defi}[Imbalanced $ \N$-$RPS$]
        The imbalanced $\N$-$RPS$ is defined on a set of vertices indexed by $2$ copies of the natural numbers. The $i$th element in the first copy of the natural numbers is labeled $r_i$, and the $j$th element in the second copy is labeled $p_j$. Defined recursively, $r_i$ beats all elements $r_j,p_k$ for $j>i,k>i$, and $p_i$ loses to all elements $r_j,p_k$ for $j>i,k>i$. Lastly, $p_i$ beats $r_i$. 
    \end{defi}

\begin{lem}
    Both the imbalanced $(2n+1)$-$RPS$ and imbalanced $\N$-$RPS$ constructions are playable. 
\end{lem}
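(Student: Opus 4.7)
The plan is to invoke Lemma~\ref{RPS-is-trivial}, which reduces playability of a two-player $RPS$ to the existence of a strictly positive probability vector in the kernel of the payoff matrix $A$; equivalently, a totally mixed distribution $\mathbf{v}$ for which every pure strategy receives zero expected payoff. Thus in both constructions I would exhibit an explicit such distribution and then verify $A\mathbf{v}=\mathbf{0}$.

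For the imbalanced $(2n+1)$-$RPS$, I would first compute the equilibrium in the small cases $n=1$ (classical $3$-$RPS$, uniform) and $n=2$ (solving the resulting $5\times 5$ linear system by hand) to identify the pattern
\begin{equation}
    v_{r_i}=v_{p_i}=\frac{3^{i-1}}{3^n},\qquad v_s=\frac{1}{3^n},
\end{equation}
whose entries sum to $1$ since $2\sum_{i=1}^n 3^{i-1}+1=3^n$. To verify $A\mathbf{v}=\mathbf{0}$ in general, I would compute $E(s),E(r_i),E(p_i)$ directly from the combinatorics of the tournament (recalling that $r_i$ beats every $r_j$ with $j<i$, every $p_j$ with $j\neq i$, and $s$; while $p_i$ beats only $r_i$ and the later $p_k$'s; and $s$ beats the $p$'s and loses to the $r$'s). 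The equation $E(s)=0$ reduces to $V_R=V_P$, which holds by the symmetry $v_{r_i}=v_{p_i}$. For $E(r_i)$ and $E(p_i)$, each partial sum $\sum_{j<i}v_{r_j}$, $\sum_{j>i}v_{r_j}$, $V_R,V_P$ evaluates through the closed form $\sum_{j=1}^k 3^{j-1}=(3^k-1)/2$, and the doubled weight $2\cdot 3^{i-1}$ coming from the unique ``rival'' of each index (namely $p_i$ for $r_i$ and $r_i$ for $p_i$) cancels exactly with the combined contribution of the lower and upper partial geometric sums plus the flat $v_s=1/3^n$ term.

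For the imbalanced $\N$-$RPS$, the recursive orientation is reversed ($j>i,k>i$ rather than $i>j,i>k$), so I would propose the mirrored distribution
\begin{equation}
    v_{r_i}=v_{p_i}=\frac{1}{3^i},
\end{equation}
whose total mass is $2\sum_{i=1}^\infty 3^{-i}=1$. Using the identities $\sum_{j>i}3^{-j}=1/(2\cdot 3^i)$ and $\sum_{j<i}3^{-j}=1/2-1/(2\cdot 3^{i-1})$, the same telescoping that worked in the finite case goes through at each index, yielding $E(r_i)=E(p_i)=0$. Since payoffs lie in $\{-1,0,+1\}$ and $\mathbf{v}\in\ell^1$, every expected payoff is an absolutely convergent series, and the kernel-of-$A$ argument of Lemma~\ref{RPS-is-trivial} carries over verbatim to the countable setting: no pure deviation against $\mathbf{v}$ can be profitable, so $\mathbf{v}$ is a totally mixed Nash equilibrium and the game is playable.

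The main obstacle is locating the right distribution: once the ansatz $v_{r_i}=v_{p_i}$ (and $v_s=v_{r_1}$ in the finite case) is in hand, the Nash equations collapse to a first-order recursion forcing ratio $3$, and everything else is bookkeeping on geometric sums. A minor subtlety is the precise meaning of ``Nash equilibrium'' on the countable strategy space of the $\N$-$RPS$; since the strategy space is discrete and payoffs are bounded, the natural definition (no profitable deviation, expectations computed as absolutely convergent sums) applies without modification, so no new machinery is required beyond the finite argument.
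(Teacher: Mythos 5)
Your proposal is correct and takes essentially the same approach as the paper: both exhibit the explicit geometric equilibrium in which the dominant pair receives probability $1/3$, the next pair $1/9$, and so on down to the three most balanced objects (or indefinitely in the $\N$ case), and verify that this totally mixed vector lies in the kernel of the payoff matrix. The only difference is bookkeeping — the paper organizes the verification as a conditional-probability (Bayes' theorem) induction on the residual subgames, whereas you check $A\mathbf{v}=\mathbf{0}$ directly via geometric sums, a route the paper itself notes is equivalent; your explicit remark on absolute convergence in the countable case is a small added care the paper glosses over.
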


\begin{proof}
To see that each of these games is playable, we need to find a probability vector in the kernel of the payoff matrix corresponding to a positive probability of choosing each object. As the only Nash equilibrium is symmetric, let $P(r_i)$ be the probability of player $q$ playing $r_i$ without loss of generality. As the expected payoff of $r_1$ must be $0$, playing $r_1$ must win as often as it loses. Therefore, the proportion of losses from playing $r_1$ given that the result was not a tie is $\frac{1}{2}$. Let $A$ be the statement ``the player playing  $r_1$ loses", and $B$ the statement ``the player playing $r_1$ did not tie". Then, by Bayes' theorem:
\begin{equation}P(A| B)= \frac{P(B|A)\cdot P(A)}{P(B)}= \frac{1\cdot P(p_1)}{1-P(r_1)}=\frac{1}{2}\end{equation}
Similarly the probability that $p_1$ wins given that it did not tie is also $\frac{1}{2}$ resulting in:
 \begin{equation}\frac{P(r_1)}{1-P(p_1)}=\frac{1}{2}\end{equation}
 Solving for $P(p_1)$ using the first equation gives that: 
 \begin{equation}P(p_1)= \frac{1}{2}-\frac{P(r_1)}{2}\end{equation}
  Inputting this into the second equation results in: 
 \begin{equation}P(r_1)=\frac{1}{4}(1+P(r_1))\end{equation}
Which solves to: 
 \begin{equation}P(r_1)=\frac{1}{3},\, \,P(p_1)=\frac{1}{3}\end{equation}
 
Note that the sum of the remaining probabilities must be $\frac{1}{3}$, and that any other object $l$ must beat $p_1$ one-third of the time and lose to $r_1$ one-third of the time. Therefore, the expected payoff contribution from the opponent playing $r_1$ or $p_1$ in this Nash equilibrium on the pure strategy of playing object $l$ is zero. Assume, by way of induction, that for some $i<n$, for all $j<i$, that:
\begin{equation}
    P(p_j)=P(r_j)= \frac{1}{3^j}
\end{equation}
The base case is above. Note that if the induction holds, then the sum of the probability of playing the $p$-type and $r$-type objects is the same, showing that the expected payoff for the pure strategy of playing $s$ is zero. The remaining proportion that must be allocated to $s$ so that the sum of the probabilities is $1$ is $\frac{1}{3^{n}}$. Thus, if the induction holds, we have shown that this imbalanced game is playable.
 
By induction, the contribution to the expected payoff of the pure strategy of playing $r_{i}$ and the pure strategy of playing $p_{i}$ from the probability the opponent plays any $r_j,p_j$ for all $j<i$ is zero. The proportion of probability that is left to allocate is $\frac{1}{3^{i}}=1-\sum_{j=1}^{i-1}2\cdot\frac{1}{3^{j}}$. Let $S_k$ be the statement that the opponent plays any object $r_l,p_l$ for $l\geq k$. As the expected payoff of $r_{i}$ given that $S_i$ is false is zero, the expected payoff of $r_{i}$ given that $S_{i}$ is true must also be zero. So, the probability that the player who plays $r_{i}$ loses given $S_{i}$ is true and that they did not tie is $\frac{1}{2}$. As before, let $A_i$ be the statement that the player who plays $r_i$ loses, and $B_i$ be the statement that the player who plays $r_i$ did not tie.
 
 By iterated Bayes' theorem, this gives us that:

 \begin{equation}P((A_i|B_i\cup S_i)= P((A_i|B_i)| S_i)= \frac{P((B_i|A_i)|S_i))P(A_1|S_i)}{P(B_i|S_i)}=\frac{P(p_i|S_i)}{1-P(r_i|S_i)}=\frac{1}{2}\end{equation}
Similarly, we can get a corresponding equation coming from the expected wins of $p_i$ given similar statements: 
\begin{equation}\frac{P(r_i|S_i)}{1-P(p_i|S_i)}=\frac{1}{2}\end{equation}
As before, solving these yields:
\begin{equation}
    P(p_i|S_i)=P(r_i|S_i)=\frac{1}{3}
\end{equation}
By the inductive assumption:
\begin{equation}
P(S_i)=\frac{1}{3^{i}}=1-\sum_{j=0}^{i-1}2\cdot \frac{1}{3^{i+1}}
\end{equation}
Therefore, by Bayes' theorem:
\begin{equation}P(p_i|S_i)=\frac{P(S_i|p_i)\cdot P(p_i)}{P(S_i)}\end{equation}
As $P(S_i|p_i)=1$ this solves to:
\begin{equation}P(p_i)=P(p_i|S_i)\cdot P(S_i)=\frac{1}{3}\cdot\frac{1}{3^i}=\frac{1}{3^{i+1}}\end{equation}
And similarly:
\begin{equation}P(r_i)=\frac{1}{3^{i+1}}\end{equation}
By induction, the probability of choosing $r_i$ is the same as the probability of choosing $p_i$, which is $\frac{1}{3^{i}}$, which finishes the proof. As the induction holds for any integer $n$, the same argument, ignoring the $s$ object and letting $n$ go to infinity, works for the $\N$-$RPS$ case. 

Note that we can also find this equilibrium by solving for the unique probability vector in the kernel of the payoff matrix of this game. The above approach is equivalent, yet more directly extends to how we will prove Nash distribution majorization below.
\end{proof}

\subsection{This Construction is the Least Balanced}\label{unfairsect}

We will prove the following:
\begin{thm} \label{unfairthm}
    The imbalanced $(2n+1)$-$RPS$ constructed above maximizes each imbalance statistic in the classes $S_N$, $S_u$, as well as $UI_e$ imbalance over all playable games on $2n+1$ objects. The imbalanced $2\N$-$RPS$ similarly maximizes these statistics. Moreover, it is the unique maximum over combinatorial/$UI_v$ and distributional/$N_t$ imbalance as these are in the strict $S_u$ and $S_N$ classes, respectively.
\end{thm}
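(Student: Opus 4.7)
The theorem asks for optimality in three separate senses: majorization in $S_u$ (strictly in $UI_v$), majorization in $S_N$ (strictly in $N_t$), and maximality of the non-Schur-monotone entropy $UI_e$. I would first record two invariants of the imbalanced construction: the scores $e_{in, r_i} = 2n - i$, $e_{in, p_i} = i$, $e_{in, s} = n$ give the sorted incoming-edge sequence $(2n-1, 2n-2, \ldots, n+1, n, n, n, n-1, \ldots, 1)$, and the $3^{-i}$ computation already carried out in the playability proof gives the sorted equilibrium probability sequence $(\tfrac{1}{3}, \tfrac{1}{3}, \tfrac{1}{9}, \tfrac{1}{9}, \ldots, \tfrac{1}{3^n}, \tfrac{1}{3^n}, \tfrac{1}{3^n})$. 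By lemma \ref{RPS-is-trivial}, every playable $(2n+1)$-$RPS$ is a strong tournament whose payoff matrix has a one-dimensional kernel spanned by a strictly positive vector, so both sequences are well-defined for any playable game.

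For the $S_N$ claim I would work at the level of the equilibrium vector using the identity $\sum_{j:\, i\text{ beats }j} v_j = (1 - v_i)/2$ already established in the Nash-equilibrium analysis above. Let $v$ denote the top-probability vertex. For any in-neighbor $u$ of $v$, applying the identity at $u$ with $v$ appearing in the left-hand summand forces $v_u \leq 1 - 2 v_v$; combined with $v_u \leq v_v$ (since $v$ is top) and the fact that the in-neighbors of $v$ together carry probability $(1 - v_v)/2$, a case analysis on $|N^-(v)|$ yields $v_v \leq 1/3$, with equality precisely when $v$ has a single in-neighbor $u$ whose only win is $v$ itself --- the $(r_1, p_1)$ structure. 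In this tight case the cross-edge contributions from $\{u, v\}$ to every remaining vertex cancel, so the conditional distribution on the remaining $2n - 1$ vertices is itself the equilibrium of a playable $(2n-1)$-$RPS$ subgame. Iterating layer by layer produces partial-sum bounds matching the imbalanced sequence, which is precisely the required majorization; strictness for $N_t$ follows because any slack propagates to a strict inequality in some partial sum.

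For the $S_u$ claim I would induct on $n$ at the level of score sequences. Playability rules out scores of $0$ or $2n$ (weakly dominated strategies), so all scores lie in $[1, 2n-1]$. The inductive step is to identify an extremal pair analogous to $(r_1, p_1)$ --- a maximum-score vertex and a low-scoring counterpart --- and to show that the residual $(2n-1)$-subtournament obtained by deleting this pair is again playable. This playability-preservation step, which is the technical heart of the direction, uses the same cross-edge cancellation as in the $S_N$ argument. Once the residual is known to be playable, the inductive hypothesis delivers majorization of its score subsequence, and reattaching the extremal pair only sharpens the partial-sum inequalities. Strictness for $UI_v$ follows whenever the input score multiset differs from the imbalanced one.

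The $UI_e$ claim is the main obstacle, since entropy of the empirical payoff distribution is not Schur-monotone in the score multiset: a strong tournament at $n = 3$ with score multiset $(5, 4, 4, 3, 2, 2, 1)$ would have strictly higher entropy than the imbalanced $(5, 4, 3, 3, 3, 2, 1)$, so the claim cannot follow from $S_u$ majorization alone. My plan here is to leverage the full algebraic constraint of playability --- a strictly positive kernel vector, not merely strong connectivity --- in order to exclude exactly the exotic score multisets that would otherwise dominate in entropy. Concretely, for each ``doubly-paired'' multiset potentially beating the imbalanced entropy, I would exhibit a nontrivial relation in the kernel of any payoff matrix realizing that multiset that forces some equilibrium coordinate to vanish, contradicting playability. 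Once these multisets are excluded, a direct comparison verifies that the imbalanced configuration realizes the maximum empirical entropy among the remaining admissible multisets. The uniqueness statements for $UI_v$ and $N_t$ then follow from the strict majorization arguments in the preceding paragraphs, since equality in all partial sums forces the two sequences (and hence, by a short tournament-reconstruction argument, the tournament itself) to coincide with the imbalanced construction.
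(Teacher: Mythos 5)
Your $S_N$ argument is essentially the paper's: bound the largest equilibrium probability by $\frac{1}{3}$ using the fact that the objects beating and losing to the top object must each carry half the remaining mass, note that equality forces the $(r_1,p_1)$ structure, and recurse on the residual subgame. This part is fine, and it shares the same informality as the paper in the non-tight case, where one must track partial sums rather than individual coordinates.

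The $S_u$ step is where you genuinely diverge, and the divergence hides a gap. Your induction requires that deleting a maximum-score vertex together with a low-scoring counterpart from an arbitrary playable $(2n+1)$-$RPS$ leaves a playable $(2n-1)$-$RPS$. This is false: in the regular $5$-$RPS$ in which object $i$ beats $i+1$ and $i+2$ modulo $5$, every vertex is a maximum-score vertex, and deleting the pair $\{3,4\}$ (among others) leaves a transitive triangle, which is unplayable. The cross-edge cancellation you appeal to holds only in the tight case where the deleted pair each carries probability $\frac{1}{3}$ and every remaining object beats exactly one of them --- that is, precisely in the imbalanced configuration whose extremality you are trying to establish --- so for a general competitor the induction has nothing to stand on. The paper avoids deletion altogether: Lemma \ref{kmin} shows by a weak-domination argument that in any playable game the $k$ smallest out-degrees (and symmetrically in-degrees) must sum to at least $\frac{k(k+1)}{2}$ for $k\le n$, and Lemma \ref{minismaj} converts simultaneous minimization of both sorted tails into majorization of $\mathbf{e}_{in}$; the imbalanced game attains every one of these bounds with equality. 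The same tool is what carries the $UI_e$ step, whose crux you identify correctly (entropy is not Schur-monotone, and doubly-paired score multisets such as $(5,4,4,3,2,2,1)$ must be excluded by playability rather than by majorization), but whose execution you leave as an unsubstantiated plan about kernel relations. The paper's exclusion is immediate from the degree bound: for $(5,4,4,3,2,2,1)$ the three smallest out-degrees sum to $1+2+2=5<6$, and the analogous count eliminates every $\pm a$ doubled pair. Without the Landau-type partial-sum bound, neither your $S_u$ claim nor your $UI_e$ claim is proved as written.
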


This theorem demonstrates that the convention that different forms of imbalance, e.g., over wealth, income, etc., should agree over the least balanced examples, by showing the same is true for two separate types of ways to identify the imbalance in games. Note that these definitions were each constructed so as to agree over the most balanced $RPS$ game, and so the convention that different forms of imbalance agree over the most balanced case is confirmed as well.

By the definition of the $S_u$, $S_N$ classes and their strict versions, we must show that the sequence of incoming edges, $\mathbf{e}_{in}$ majorizes all other possible sequences over playable games, and that the sequence of Nash equilibrium probabilities $\mathbf{v}_N$ majorizes all other probabilities over all playable games. Finally, we will use a separate argument to show that the above games maximize $UI_e$. 

\subsubsection*{Maximal Majorization of $e_{in,i}$ over Playable Games, and Maximization of Uniform Imbalance on $\N$-$RPS$}

To begin, we will show the following known lemma\cite{MarshallOlkinArnold2011Majorization2e}:
\begin{lem} \label{minismaj}
    If for all $k\leq n,k'\leq n$, a $(2n+1)$-tournament strictly minimizes both:
    \begin{equation}\label{sums-to-minimize}
        \sum_{i=1}^ke_{in,i}^\uparrow,  \sum_{i=1}^ke_{out,i}^\uparrow,
    \end{equation}
then the sequence of incoming edges $\mathbf{e}_{in}$ strictly majorizes all other sequences of incoming edges $\mathbf{f}_{in}$.
\end{lem}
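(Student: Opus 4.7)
The plan is to convert majorization of descending sequences into ascending-partial-sum inequalities, then cover the two halves of the partial-sum range using the two hypotheses via the in/out involution on a tournament.

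First, every $(2n+1)$-tournament has total in-degree $\binom{2n+1}{2} = n(2n+1)$, so $\sum_i e_{in,i} = \sum_i f_{in,i}$ for any comparison tournament $\mathbf{f}$. Given equal totals, the descending majorization inequalities $\sum_{i=1}^k e_{in,i}^\downarrow \geq \sum_{i=1}^k f_{in,i}^\downarrow$ for $k = 1, \ldots, 2n$ are equivalent, by complementation in the total, to the ascending inequalities
\begin{equation*}
\sum_{i=1}^k e_{in,i}^\uparrow \;\leq\; \sum_{i=1}^k f_{in,i}^\uparrow \qquad (k = 1, \ldots, 2n),
\end{equation*}
and strict majorization requires at least one of these to be strict.

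Next I would split $k \in \{1, \ldots, 2n\}$ into two halves. For $1 \leq k \leq n$, the strict ascending inequality is precisely the first hypothesis on $\sum_{i=1}^k e_{in,i}^\uparrow$. For $n+1 \leq k \leq 2n$, I would invoke the tournament identity $e_{out,i} = 2n - e_{in,i}$: the ascending sort of out-degrees equals $2n$ minus the descending sort of in-degrees, so
\begin{equation*}
\sum_{i=1}^{k'} e_{out,i}^\uparrow \;=\; 2nk' \;-\; \sum_{i=1}^{k'} e_{in,i}^\downarrow \;=\; 2nk' \;-\; n(2n+1) \;+\; \sum_{i=1}^{2n+1-k'} e_{in,i}^\uparrow.
\end{equation*}
Setting $k' = 2n+1-k$, so that $k'$ ranges over $\{1, \ldots, n\}$ exactly as $k$ ranges over $\{n+1, \ldots, 2n\}$, this realizes $\sum_{i=1}^k e_{in,i}^\uparrow$ as a constant depending only on $n$ and $k$ plus $\sum_{i=1}^{k'} e_{out,i}^\uparrow$. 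Since the identical formula holds for $\mathbf{f}$, the strict minimization of $\sum_{i=1}^{k'} e_{out,i}^\uparrow$ for $k' \leq n$ (the second hypothesis) transfers directly to strict minimization of $\sum_{i=1}^k e_{in,i}^\uparrow$ for $n+1 \leq k \leq 2n$.

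Combining both halves yields strict inequality for every $k \in \{1, \ldots, 2n\}$, and so $\mathbf{e}_{in}$ strictly majorizes every other playable in-degree sequence $\mathbf{f}_{in}$. The main obstacle is conceptual rather than computational: recognizing that the in-degree and out-degree hypotheses are exactly complementary under the tournament involution $x \mapsto 2n - x$, so that together they cover the full range of partial-sum tests, with neither hypothesis alone sufficient. Once that duality is identified, nothing remains but index bookkeeping and no analytic content is involved.
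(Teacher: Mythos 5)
Your proposal is correct and follows essentially the same route as the paper: both arguments rest on the fixed total degree $n(2n+1)$ and the involution $e_{out,i}=2n-e_{in,i}$, splitting the $2n$ partial-sum tests into two halves handled by the in-degree and out-degree hypotheses respectively. The only difference is bookkeeping --- you phrase everything as ascending partial sums while the paper works with the descending ones directly --- so the two proofs are the same up to complementation of indices.
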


This lemma will allow us to use the same method of proof for the $\N$-$RPS$ case as the $(2n+1)$-$RPS$ case, despite the different definitions of majorization.

\begin{proof}
    Firstly, we can confirm that the sum of incoming edges is the same over all $(2n+1)$-vertex tournaments. So all that needs to be shown is that for all $m\leq 2n$,
    \begin{equation}\label{maj-equal-minei}
        \sum_{i=1}^m e_{in,i}^\downarrow >\sum_{i=1}^m  f_{in,i}^\downarrow 
    \end{equation}
    As for all $m\leq n$,
    \begin{equation}
        \sum_{i=1}^m e_{out,i}^\uparrow <\sum_{i=1}^m  f_{out,i}^\uparrow 
    \end{equation}
 Then as $e_{in,i}=2n-e_{out,i}$ we have:
     \begin{equation}
        \sum_{i=1}^m e_{in,i}^\downarrow= \sum_{i=1}^m (2n-e_{out,i}^\uparrow) >\sum_{i=1}^m (2n-f_{out,i}^\uparrow)=\sum_{i=1}^m f_{in,i}^\downarrow
    \end{equation}
    This proves the equation \ref{maj-equal-minei} for $m\leq n$. 

    As we have for all positive $m\leq n$:
    \begin{equation}
        \sum_{i=1}^m e_{in,i}^\uparrow <\sum_{i=1}^m  f_{in,i}^\uparrow 
    \end{equation}
    And:
    \begin{equation}
        \sum_{i=1}^{2n+1} e_{out,i}^\uparrow =\sum_{i=1}^{2n+1}  f_{out,i}^\uparrow 
    \end{equation}
    Then we have that:
    \begin{equation}
        \sum_{i=1}^{2n+1-m} e_{out,i}^\downarrow=\sum_{i=1}^{2n+1} e_{out,i}^\uparrow - \sum_{i=1}^m e_{in,i}^\uparrow >\sum_{i=1}^{2n+1}  f_{out,i}^\uparrow -\sum_{i=1}^m  f_{in,i}^\uparrow=\sum_{i=1}^{2n+1-m} f_{out,i}^\downarrow
    \end{equation}
    This completes the proof by showing equation \ref{maj-equal-minei} for $m$ such that $n\leq m\leq 2n$.
\end{proof}

Therefore, if we can show that the given imbalanced $(2n+1)$-$RPS$ minimizes sums \ref{sums-to-minimize}, and that the minimal sequence, $\mathbf{e}_{min}$ of the $\N$-$RPS$ has no infinite elements and minimizes the sum over all $k\in \N$:
\begin{equation}
    \sum_{i=1}^k e_{min,i}^\uparrow
\end{equation}

We will begin by proving the following corollary to Landau's theorem of strong tournament sequences \cite{Landau1953ScoreStructure} \cite{Moon1963ExtensionLandau}:
\begin{cor}\label{mincor}
 In the finite case, for all $k\leq n$, we have that
 \begin{equation}
     \sum_{i=1}^k e_{out,i}^\uparrow \leq \frac{k(k+1)}{2},\sum_{i=1}^{n+1} e_{out,i}^\uparrow \leq \frac{(n+1)n}{2}+n
 \end{equation}
 In the infinite case, for all $k\in \N$,
 \begin{equation}
     \sum_{i=1}^k e_{out,i}^\uparrow \leq \frac{k(k+1)}{2}
 \end{equation}
For both games, the same is true for the sum of the smallest $k$ indices of $\mathbf{e}_{in}$.
 \end{cor}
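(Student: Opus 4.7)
The plan is to invoke Landau's theorem on tournament score sequences, coupled with Moon's strict refinement for strongly connected tournaments and the stronger no-weak-domination consequence of playability (Lemma \ref{RPS-is-trivial}), to bound the partial sums $\sum_{i=1}^k e_{out,i}^{\uparrow}$ by the claimed quantities. By Lemma \ref{RPS-is-trivial}, a playable $(2n+1)$-$RPS$ is strongly connected, so Landau-Moon yields $\sum_{i=1}^k e_{out,i}^\uparrow \geq \binom{k}{2}+1$ strictly for $1\leq k<2n+1$; by tournament reversal the same bound holds for in-degrees. Since $\frac{k(k+1)}{2} = \binom{k}{2}+k$, the entire task is to promote Moon's strict ``$+1$'' to the corollary's ``$+k$'' using the playability hypothesis (reading the ``$\leq$'' in the corollary as the intended ``$\geq$'', since the surrounding text declares these to be the values that the imbalanced construction \emph{minimises}), and analogously to obtain the extra $+n$ in the $k=n+1$ case.

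My approach would be to reformulate the partial sum combinatorially: if $S$ denotes a set of $k$ vertices realising the $k$ smallest out-degrees, the within-$S$ edges contribute exactly $\binom{k}{2}$ to the out-degree sum of $S$, so $\sum_{i=1}^k e_{out,i}^\uparrow = \binom{k}{2} + \ell_S$, where $\ell_S$ is the number of edges directed from $S$ to its complement. The corollary's bound is therefore equivalent to $\ell_S \geq k$. I would prove $\ell_S \geq k$ by contradiction using the totally mixed Nash equilibrium: summing the zero-payoff identities $A\mathbf{v} = \mathbf{0}$ across the $k$ rows indexed by $S$ produces a linear relation on the Nash probabilities of vertices outside $S$, and if $\ell_S < k$ this relation becomes incompatible with the existence of a strictly positive Nash vector unless two vertices in $S$ coincide in their external win-loss structure --- which exhibits a weakly dominated strategy, contradicting the strong playability that follows from playability for two-player symmetric zero-sum $RPS$ games.

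The sharper bound $\sum_{i=1}^{n+1} e_{out,i}^\uparrow \geq \frac{(n+1)n}{2}+n$ at $k=n+1$ follows from the same approach applied to the set $S$ of $n+1$ vertices of smallest out-degree, together with the observation that the mean out-degree in a $(2n+1)$-tournament equals exactly $n$, so the extra slack $+n$ is extracted from the symmetric upper half of the degree distribution by a dual application of the same weak-domination argument on the reversed tournament. For the $\N$-$RPS$ case, the bound transfers by restriction: for each $k$, the $k$ smallest-out-degree vertices of the $\N$-$RPS$ together with a finite buffer yield a playable $(2m+1)$-$RPS$ whose finite bound immediately gives the infinite bound, since the first $k$ out-degrees coincide under this embedding.

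The main obstacle will be making the Nash-to-domination conversion rigorous across the full range of $k$: Moon's bound falls short of the corollary by $k-1$, and this entire deficit has to come from the absence of weakly dominated strategies; packaging the aggregate Nash equation into an explicit pair of vertices with coincident external structure requires a careful accounting of where probability mass in the complement of $S$ concentrates, and may force a more delicate choice of $S$ than the default ``$k$ smallest out-degrees.''
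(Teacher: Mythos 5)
You are right that the stated ``$\leq$'' must be read as ``$\geq$'', and your decomposition $\sum_{i=1}^k e_{out,i}^\uparrow = \binom{k}{2} + \ell_S$, with $\ell_S$ the number of losses of the $k$-minimizing set $S$ to its complement, is exactly the accounting the paper uses. But the central step --- forcing $\ell_S \geq k$ --- is not established by your proposed mechanism. Summing the zero-payoff identities $(A\mathbf{v})_i = 0$ over $i \in S$ cancels the internal edges and yields the single relation $\sum_{j \notin S}(k - 2c_j)v_j = 0$, where $c_j$ is the number of $S$-objects beaten by $j$; positivity of $\mathbf{v}$ then only tells you that $k\sum_{j\notin S} v_j = 2\sum_{j\notin S} c_j v_j \leq 2\ell_S \max_j v_j$, which is perfectly compatible with $\ell_S < k$ whenever one outside object carries more than half the outside probability mass (as happens, e.g., in the imbalanced construction itself). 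Your fallback --- that failure forces two vertices of $S$ to ``coincide in their external win-loss structure,'' hence a weak domination --- is both unargued and not a valid implication: agreement outside $S$ says nothing about the edges inside $S$, so neither vertex need dominate the other. The paper closes this gap with a direct domination argument (Lemma \ref{kmin}): if $\ell_S < k$ then by pigeonhole some $b \in S$ loses to no object outside $S$, and such a $b$ weakly dominates every object lying outside $M_k \cup B_k$; hence either every object of $S$ has an external loss ($\ell_S \geq k$) or $M_k \cup B_k$ exhausts the game ($\ell_S \geq 2n+1-k$). This dichotomy is also what produces the weaker bound $\binom{n+1}{2}+n$ at $k = n+1$, since there $2n+1-k = n < k$; your account of that case (``extracting slack from the reversed tournament'') does not correspond to any actual argument, and indeed $\ell_S \geq k$ is false at $k=n+1$ for the extremal game.

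Two further points. First, your reduction of the $\N$-$RPS$ case by ``restricting to the $k$ smallest-out-degree vertices plus a finite buffer'' is unjustified: a sub-tournament of a playable $\N$-$RPS$ need not be playable, so the finite bound does not transfer by restriction. The paper instead observes that in the infinite game the second branch of the dichotomy forces $\ell_S$ to be infinite, so only the first branch can occur and the bound $\binom{k}{2}+k$ holds for every $k$. Second, the appeal to Landau--Moon is not needed and does no work here: the paper's argument proves the full $\binom{k}{2}+k$ bound directly from playability (via the absence of weakly dominated strategies), rather than promoting Moon's $+1$.
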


We will provide this proof using the following lemma:
\begin{lem}\label{kmin}
    Let $G$ be a $(2n+1)$-$RPS$, $k\leq n+1$. Let the first $k$ objects with the smallest $e_{out,i}$ be the $k$-minimizing objects. $G$ is playable only if each $k$-minimizing object has at least one object that is not $k$-minimizing beating it, or the set of objects beating the $k$-minimizing objects consists of the rest of the objects in the game. The same statement holds for any $\N$-$RPS$, and any $k\in \N$. Moreover, a similar restriction is present for the $k$-maximizing objects.
\end{lem}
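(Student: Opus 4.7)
The plan is to prove this by contrapositive, leveraging Lemma~\ref{RPS-is-trivial}: playability of a two-player $RPS$ is equivalent to strong connectivity of the underlying tournament. So I would assume both conditions in the statement (every $k$-minimizing object has a non-$k$-minimizing object beating it, or every non-$k$-minimizing object beats some $k$-minimizing object) fail, and exhibit a proper nonempty subset $S \subsetneq V$ across which every cross-edge in the beats-graph points one way, contradicting strong connectivity.

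First I would set up the witnesses. Denote by $L \subseteq K$ the set of $k$-minimizing objects whose losses lie entirely within $K$ (equivalently, those beating every non-$K$ object), and by $M \subseteq V \setminus K$ the set of non-$k$-minimizing objects that lose to every $k$-minimizing object. Failure of the first condition gives $L \neq \emptyset$, failure of the second gives $M \neq \emptyset$, and directly from the definitions every $\ell \in L$ beats every $v \in V \setminus K$ (so $L$ beats $M$ wholesale), while every $m \in M$ loses to every $k \in K$ (so $K$ beats $M$ wholesale). These observations sharply constrain the bipartite structure between $K$ and $V \setminus K$.

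The core step is constructing the separating set $S$. My candidate is the forward closure of $L$ under the ``beats'' relation: starting from $L$, add every vertex reachable by a directed path of beats-edges. Since $L$ beats all of $V \setminus K$, the closure contains $V \setminus K$ (including $M$); since $L$'s losses are entirely within $K$, the closure can re-enter $K$ only through $L$'s internal losses into $K \setminus L$. I would then use the $M$-witness $o'$ --- losing to every $K$-object yet already inside the closure --- to argue that $K \setminus L$ either collapses to $\emptyset$ (making $L$ a global source, immediately contradicting strongness) or partitions so that some $K \setminus L$ element is unreachable from $L$, producing the desired separating $S$. The $\N$-$RPS$ version goes through verbatim because the argument is local to $K$ and its immediate neighbors, and the $k$-maximizing variant follows by applying the same argument to the reverse tournament, which is playable iff the original is and which swaps $k$-minimizing for $k$-maximizing.

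The main obstacle I foresee is showing that the closure of $L$ is a proper subset of $V$: the $M$-witness $o'$ is already inside the closure, so the required separation must come from the $K$-side, via some $K \setminus L$ element that $L$ cannot reach along a beats-chain staying inside $V \setminus (K \setminus L)$. Pinning this down requires a counting argument relating $|L|$, $|M|$, and the number of edges from $L$ into $K \setminus L$ against the forced edges from $K \setminus L$ into $M$; this is where the hypothesis $k \leq n+1$ should manifest itself, since for larger $k$ one no longer has enough room in $V \setminus K$ to house a nontrivial $M$.
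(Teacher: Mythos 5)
Your plan has a genuine gap that I do not think can be repaired: you are trying to derive a failure of strong connectivity from the negation of the lemma's condition, but that implication is false. The configuration the negation hands you --- a $k$-minimizing object $b$ beating every non-$k$-minimizing object, together with an object $i$ outside $K$ that beats no $k$-minimizing object --- is perfectly compatible with strong connectivity: $i$ can beat some $j\notin K$ that beats some $c\in K$ that in turn beats $b$ inside $K$, closing the cycle, so no separating set need exist. You already sense this when you note that the forward closure of $L$ swallows all of $V\setminus K$ and can re-enter $K$; no counting argument involving $k\leq n+1$ rescues it (the paper's own proof of this lemma never uses that bound). There is also a secondary problem with the premise itself: playability implies the tournament is strong, but the converse fails --- the \textit{pierre-papier-ciseaux-puits} tournament from the introduction is strongly connected yet unplayable, and one can build odd strongly connected tournaments (a $3$-cycle plus two vertices each beating exactly one cycle vertex) whose unique equilibrium is supported on only three objects --- so ``not strong'' is in any case the wrong invariant to target.

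The obstruction the paper actually uses is weak domination, a pairwise comparison of payoff rows rather than a reachability statement. Writing $M_k$ for the set of $k$-minimizing objects and $B_k$ for the set of objects beating some member of $M_k$, take the witnesses $b\in M_k$ beating all of $V\setminus M_k$ and $i\notin M_k\cup B_k$. Then $i$ loses to every object of $M_k$ (being outside $B_k$), while $b$ wins against every object outside $M_k$, including $i$ itself and everything $i$ could possibly beat; hence $b$ weakly dominates $i$, strictly so against the opponent's pure strategy $b$. By the discussion in the introduction (strongly playable games have no weakly dominated strategies) combined with Lemma \ref{RPS-is-trivial} (weakly playable two-player $RPS$ games are $2$-strongly playable), the game is unplayable. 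This argument is three lines, is local in exactly the way you wanted, and transfers verbatim to the $\N$-$RPS$ and, by reversing all orientations, to the $k$-maximizing case as you proposed.
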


To motivate this lemma, consider that the smallest that $e_{out,1}$ can be in any playable game is $1$; as, if an object had $0$ outgoing edges, it would weakly dominate all other objects in the game. For $n>1$, if two objects $a,b$ had $e_{out,b}=e_{out,b}=1$ and, without loss of generality, assume that object $a$ beats object $b$, then there is some unique object, $o$, which beats $a$ and loses to $b$. For any other object, $i$, objects $a$ and $b$ both beat $i$. As $b$ only loses to $a$, $b$ weakly dominates $i$. Thus, for this to be playable, $b$ should have an object that is not $a$ beat it.

\begin{proof}
 
 Consider, by way of contradiction, that for a $(2n+1)$-$RPS$, there is some $k\leq n+1$, or for a $\N$-$RPS$, some $k\in \N$, such that this lemma does not hold. Then there is some $k$-minimizing object, $b$, that beats all non-$k$-minimizing objects, and the objects that directly beat the $k$-minimizing objects do not constitute the entire rest of the game. 

\begin{figure}[ht]
    \centering
    \includegraphics[width=.5\linewidth]{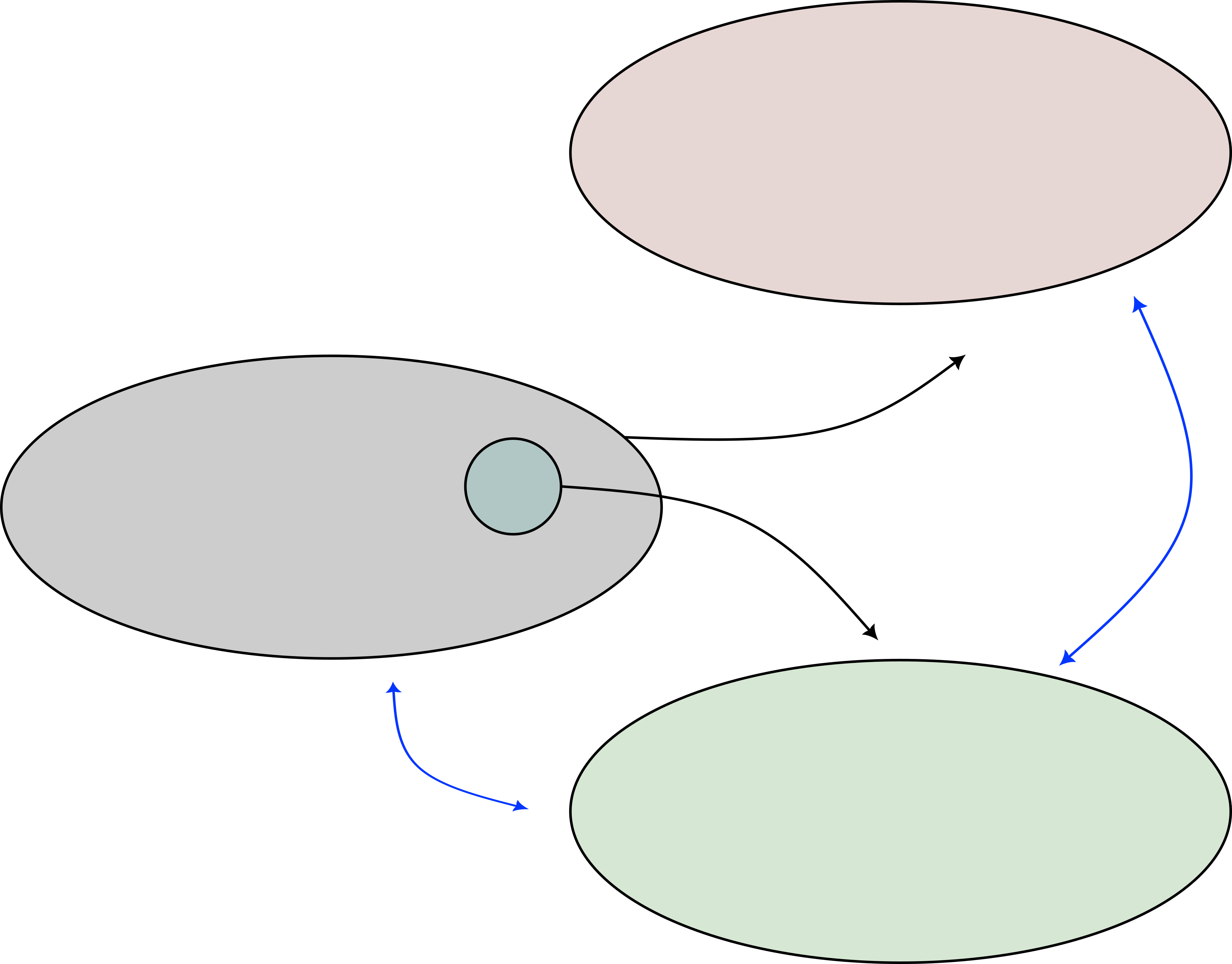}
    \caption{A visual representation of the $k$-minimizing objects in gray, containing the object $b$. The set of objects that beat a $k$-minimizing object is green, and the set of objects that lose to a $k$-minimizing object is red. It is clear that the green and grey sets weakly dominate the red set.}
    \label{fig:krpsmin}
\end{figure}
 
 Then, to see that this is unplayable, consider the sub-game with all minimizing objects, $M_k$, and objects that beat said objects, $B_k$. Any object, $i$, not in this sub-game is weakly dominated by $b$, i.e., $i$ loses to all objects in $M_k$ and, in the best case, beats all objects in $B_k$ and all other objects outside the subgame. However, by assumption, $b$ beats all of $B_k$, the objects outside the subgame, and $i$. Therefore, $i$ is weakly dominated, and the game is not playable.
\end{proof}

We can use lemma \ref{kmin} to prove the corollary \ref{mincor}. In the case where each $k$-minimizing object loses to exactly one other non-minimizing object, this would make the sum:
 \begin{equation}\label{RPS_bound_small}\sum_{i\in M_k} e_{out,i}\geq \frac{k(k-1)}{2}+k\end{equation}
 Where the first term is the contribution to $e_{out,i}$ between the subgraph of minimizing objects, and the second term is the contribution from the extra non-minimizing object, which beats the associated minimizing object. In the case that the set of objects that beat the $k$-minimizing objects is the rest of the entire game, in the infinite game, this sum must be infinite, but in the finite game, this sum must be at least:
 \begin{equation}\label{RPS_bound_large}\frac{k(k-1)}{2} +((2n+1)-k)\end{equation}
The first term is the contribution from the $e_{out,i}$ between the subgraph of $k$-minimizing objects, and the second term is the minimum number of edges to the rest of the objects in the game. This bound is smaller than the above bound \eqref{RPS_bound_small} when $k=n+1$, where the minimum is $\frac{(n+1)n}{2}+n$. By a similar argument, reversing the direction of each domination, the same is true for $e_{in,i}$. This finishes the proof of the corollary.

In the given imbalanced $(2n+1)$-$RPS$ constructions the $k$-minimizing set is the set of $\{r_i|i<k\}$, and $\{r_i| i\leq n\}\cup \{s\}$, and the $k$-maximizing set is the set of $\{p_i|i<k\}$, and $\{p_i| i\leq n\}\cup \{s\}$. As the provided imbalanced game is playable and has sequences of 
\begin{equation}
    \mathbf{e}_{out}^\uparrow=\{1,2,\dots,n,n\dots\}, \text{ and } \mathbf{e}_{in}^\uparrow= \{1,2,\dots,n,n\dots\}
\end{equation}
This sequence reaches these bounds. Therefore, these bounds must be strict, and by lemma \ref{minismaj} this game's $\mathbf{e}_{in}$ majorizes that of all other playable $(2n+1)$-$RPS$ games.

Therefore, as this sequence of $\mathbf{e}_{in}$ defines the tournament among playable tournaments, the above construction must be the unique maximally imbalanced game for all imbalance measures in the strict $S_u$ class and a maximally imbalanced game for all imbalance measures in the $S_u$ class. 

\subsubsection*{$\N$-$RPS$ Combinatorial Imbalance}
For the uniform imbalance in $\N$-$RPS$ we can see that our construction has no object $i$ with infinite $e_{min,i}$. Moreover, the minimal possible sequence of $a_{min,m}=\sum^m_{i=0}e_{min,i}^\uparrow$ is also given by:
\begin{equation}
    a_{min,m}= min(\{a_{out,k}+a_{in,m-k}| 0<k<m\})
\end{equation}
 Where, among any $\N$-$RPS$ games:
\begin{equation}
    a_{out,k}=\sum^k_{i=0}e_{out,i}^\uparrow
    ,a_{in,k}=\sum^k_{i=0}e_{in,i}^\uparrow
\end{equation}
In the constructed imbalanced $\N$-$RPS$ game, these minimum $a_{out,k}$ and $a_{in,k}$'s are reached as mentioned above, and so we must have that each $a_{min,i}$ is minimal over any other playable $\N$-$RPS$. Thus, this construction maximizes the uniform imbalance over $\N$-$RPS$s. Moreover, this minimal sequence of $\mathbf{e}_{min}^\uparrow=(1,1,2,2,3,4,4,5,5,6,6\dots)$, as well as the lack of infinite elements uniquely determines the imbalanced $\N$-$RPS$ construction. We can see this as the only playable $\N$-$RPS$ with this sequence of $\mathbf{e}_{min}^\uparrow$, and no vertex with infinite $e_{min,i}$, must have:
\begin{equation}
    \mathbf{e}_{in}^\uparrow=\{1,2,\dots\},
    \mathbf{e}_{out}^\uparrow=\{1,2,\dots\}
\end{equation}
As these two sequences account for each vertex in the tournament, we can uniquely describe the playable $\N$-$RPS$ with these sequences, by following the logic of our above construction, e.g., the object $i$ with $e_{out,i}=1$ must lose to an object $j$ with $e_{in,j}=1$, etc.

\subsubsection*{Maximal Majorization of Nash Probabilities over Playable Games}
 
Above, the probability of choosing each object $r_l,p_l$ was shown to be $\frac{1}{3^{l}}$. We can see that the maximum probability of choosing any object in a Nash equilibrium of a playable $RPS$ is $\frac{1}{3}$. For if object $a$ has probability of being played $P(a)=\frac{1}{3}+\epsilon$ then the objects which beat $a$, $\{a_w\}$, and the objects which lose to it, $\{a_l\}$, must have the same probability of being played $P(\cup_w a_w)=P(\cup_l a_l)=\frac{1}{3}-\frac{\epsilon}{2}$. Then, playing in the subgame containing objects $\{a_w\}$ must have a positive expected payoff. For even if all of $\{a_l\}$ beat all of $\{a_w\}$, the expected payoff averaged over all objects in $\{a_w\}$ receives $0$ expected payoff from the other elements in $\{a_w\}$, $-(\frac{1}{3}-\frac{\epsilon}{2})$ expected payoff from the objects in $\{a_l\}$ and $\frac{1}{3}+\epsilon$ expected payoff from $a$, for a total expected payoff of $\frac{\epsilon}{2}$. Therefore, this cannot be a Nash equilibrium.
   
If two objects have $\frac {1}{3}$ probability of being played, then every object, $o$, in the remaining set must lose to one of them and beat the other, as otherwise, $o$ would have a non-zero expected payoff. Therefore, if there is more than one remaining object, in the subgame containing the remaining objects, repeating the above argument works to show that the maximum probability of any remaining object is $\frac{1}{9}$. Inductively, the sequence of majorized probabilities is then:

\begin{equation}
    \{\frac{1}{3},\frac{1}{3},\dots \frac{1}{3^i},\frac{1}{3^i}, \dots \frac{1}{3^{(n)}},\frac{1}{3^{(n)}},\frac{1}{3^{(n)}}\}
\end{equation}
In the infinite case, it is:
\begin{equation}
    \{\frac{1}{3},\frac{1}{3},\dots \frac{1}{3^i},\frac{1}{3^i},\dots\}
\end{equation}
This finishes the proof that the given imbalanced games, both finite and infinite, uniquely maximize all imbalanced definitions in the strict $S_N$ class, e.g., distribution/$N_t$ imbalance, and are a maximally imbalanced $RPS$ with respect to any imbalanced definition in the  $S_N$ class, e.g., $N_e$ imbalance. 

\subsubsection*{Maximizing $UI_e$}

Maximizing $UI_e$, by definition, corresponds to maximizing the entropy of the uniform expected payoff distribution. Therefore, for achievable scores $\frac{i}{2n}$ for $i\in \Z$, it corresponds to maximizing:
\begin{equation}
    H(P)=-\sum_{i\in \Z} P(score=i/2n)\ln( P(score=i/2n))
\end{equation}

We can see that generally, $H$ is maximized when each $P(score=i/2n)$ that is non-zero is minimized. As the only possible values of $e_{in,i}-e_{out,i}$ in a playable $(2n+1)$-$RPS$ are even and between $2n-2$ and $-2n+2$, we have $2n-1$ score values and $2n+1$ objects with scores. Therefore, we cannot uniformly place the minimum probability of $\frac{1}{2n+1}$ at each possible score value. 

The best we can hope to achieve is that $2n-3$ scores have the minimum probability of $\frac{1}{2n+1}$, and two scores have a probability of $\frac{2}{2n+1}$. As the average of these scores must be zero, these scores must be $-a/2n$ and $a/2n$ for some $a\neq 0$. However, by the corollary \ref{kmin}, this resulting game would not be playable as the outgoing edges from the $(2n-a+1)$-minimizing objects would be too few.

The next remaining option that still includes giving every possible score a positive probability is to have $2n-2$ scores have the minimum possible probability of $\frac{1}{2n+1}$ and the score of $0$ have the probability of $\frac{3}{2n+1}$. This result is achieved in the above construction. The entropy of this result is:

\begin{equation}
\begin{split}
    -\left(\frac{2n-2}{2n+1}\ln\left(\frac{1}{2n+1}\right)+ \frac{3}{2n+1}\ln\left(\frac{3}{2n+1}\right)\right)=\\
     \ln(2n+1)-\frac{3}{2n+1}\ln(3)
\end{split}
\end{equation}

The next best option would not include giving every possible score a positive probability. Instead it would involve giving $2n-5$ scores a probability of $\frac{1}{2n+1}$ and three scores a probability of $\frac{2}{2n+1}$. The entropy of this outcome is:

\begin{equation}
\begin{split}
    -\left(\frac{2n-5}{2n+1}\ln\left(\frac{1}{2n+1}\right)+ \frac{6}{2n+1}\ln\left(\frac{2}{2n+1}\right)\right )=\\
     \ln(2n+1)-\frac{6}{2n+1}\ln(2)
\end{split}
\end{equation}
As $6\ln(2)>3\ln(3)$, the imbalanced game provided above is maximally imbalanced with respect to $UI_e$ while remaining playable.

Therefore, we have shown that both the game-theoretic and combinatorial imbalance statistics for games agree on the maximum imbalanced $RPS$ games that remain playable, thereby completing the theorem \ref{unfairthm}.

\section{Two Use Cases: Ecological Modeling and Competitive Card Games} \label{applications}

\subsection{An Ecological Game}
As described above, we can model population dynamics with an $n$-$RPS$, $G$, where each subgroup considered corresponds to an object or pure strategy in $G$, and the direction of each edge indicates which group benefits from the increased population of the other, different forms of this have been considered in \cite{LairdSchamp2015IntransitivityMEE}, \cite{LairdSchamp2018ComputationalChallenges}, and  \cite{KerrEtAl2002LocalDispersalRPS}. We can then model the changes of an ecosystem as a sequence of games of $G$. More precisely, we can consider the extended game, $\mathfrak{G}$, where we partition a time period $t=\{t_1\dots t_i\dots \}$, and have a sequence of players, $p_i$, each playing the given game (an $n$-$RPS$, for example), $G$, with the previous player, $p_{i-1}$, and the later player $p_{i+1}$. Each player is restricted to playing the same mixed strategy for both games, and a player at a later time knows the mixed strategy of the player at the previous time. In this analogy, the mixed strategy of $p_i$ corresponds in some way to the population distribution at time $t_i$.

For instance, at time $t_{i-1}$, there might be an overabundance of bunnies, corresponding to the player, $p_{i-1}$, playing a mixed strategy with too high a probability of playing the bunny object. Then, at time $t_{i}$, there may be many wolves that consume all the bunnies. In terms of the game, this corresponds to the player at time $p_{i}$ choosing to maximize his payoff against $p_{i-1}$ by playing a higher proportion of wolves. It may be that, as the wolves kill most small animals, the flora will become abundant in the coming years, while the fauna will dwindle. In the game, this corresponds to the player $p_{i+1}$ playing no bunnies and only flora to maximize his potential winnings against player $p_{i}$.

We can show that each Nash equilibrium of $G$ defines a Nash equilibrium for $\mathfrak{G}$. As for an equilibrium, $(s_1,s_2)$, if player $p_1$ plays mixed strategy $s_1$, player $p_2$ cannot get a better payoff against $p_1$ than that given by playing $s_2$. $p_2$ similarly knows that if he plays a non-equilibrium strategy, $s$, $p_3$ could take advantage for a net benefit, and as $G$ is zero-sum, $p_2$ has a strictly negative payoff. Therefore, $p_2$ must also choose a strategy, ${s_2}'$, such that $(s_1,{s_2}')$ is a Nash equilibrium for maximizing his expected payoff against both $p_1$ and $p_3$. Inductively, we can see that this continues for all players. Note that not all equilibrium positions must come from the set of equilibria of the underlying $RPS$ game. However, just as in the case of the side-blotched lizards, we can see that there may be stable equilibria of $\mathfrak{G}$ that seem to rotate around the equilibria of $G$.

This model can also be adjusted to account for environmental changes and the instability they cause. For example, say that there was a massive fire at time $t_i$ killing all wolves; this could correspond to $p_i$ being unable to choose the object corresponding to wolves in his mixed strategy. In a playable game, this may cause later players to take advantage of the suboptimal position of previous players. For example, in standard $RPS$ terminology, we can consider what the optimal strategy is for $p_i$ given that they cannot play paper. Moreover, how would this change $p_{i+1} $'s strategy? It would also be interesting to know how the underlying game affects whether the dynamic restabilizes, and if so, how long that would take.

\begin{rems}[Directionality of $\mathfrak{G}$]
    In $\mathfrak{G}$ each player $p_i$ played against the previous player $p_{i-1}$ and the next player $p_{i+1}$; however, in reality we should expect the current population distribution to be affected substantially by the previous population distribution and to a much lesser degree by the influence of a future population distribution, for instance through attractors \cite{TaylorJonker1978Replicator}. We could account for this by weighing the payoff for $p_i$ in the later game with $p_{i+1}$ by some constant $0<\lambda<1$. 
\end{rems}

In this model, a playable game implies the existence of a stable system in which each species in the system has a positive existing population, and a strongly playable game would correspond to an environment where every stable equilibrium must have the property that every considered group has a positive population. For example, we can use this model as a base to describe Darwin's finches fairly accurately. In the stable equilibrium of each island, certain finches have positive populations; However, there are few, if any, islands whose stable equilibrium contains all species of finch \cite{GrantGrant2008}. The Galapagos ecosystem would correspond to a game $\mathfrak{G}$ built upon a game $G$ that is weakly playable, but not necessarily playable, as each finch has a Nash equilibrium where it sees play, but no Nash equilibrium plays all finches.

In contrast, an imbalanced base game $G$ corresponds to an ecological game $\mathfrak{G}$ where a particular group dominates over many others, and or certain groups seem to be dominated by many others. For instance, consider the following hypothetical sketch of a food web of the coniferous forest:
\begin{ex}[Imbalanced Ecosystem]
    Consider an ecosystem containing:
    \begin{itemize}
        \item A species of wolf that dominates nearly all other fauna. 
        \item Several species of large fauna, e.g., elk, moose, that dominate smaller fauna and most flora.
        \item Several species of small fauna, e.g., rodents, etc., which dominate nearly all types of flora.
        \item Several types of flora, e.g., some reachable by large fauna, and some only reachable by small fauna.
    \end{itemize}
\end{ex}
This ecological system is imbalanced in that the wolf seems to dominate over all other fauna in the region, and the flora is dominated by nearly all the fauna in the region. Despite this imbalance, if the underlying game were strongly playable, that would imply that without positive populations for the different species in the food web, the food web collapses. We conjecture that this asymmetric yet stable scenario can naturally apply evolutionary pressure to a subspecies while maintaining a given niche for each species to avoid complete species extinction. The motivation for this approach is to combat the imbalance by forming new niches. This behavior is alluded to in the fact that as the number of strategies in the maximally imbalanced $RPS$ increases, the maximal imbalance statistics decrease.

To try to extend these concepts to model genuine ecological relationships better, one would have to extend past simple $RPS$ games. Extensions to more complicated games would allow us to weigh each species' interaction, rather than just categorizing it as advantageous or disadvantageous, and also enable us to account for symbiotic relationships between different species.

\subsection{Imbalance and Playability as Health Indicators in Card Games}
We can also apply the above ideas to competitions in trading card games such as \textit{Magic: The Gathering} or \textit{Yu-Gi-Oh}. In these competitions, each player chooses a deck and plays several games against other players. In high-level play, each deck can be categorized into one of several archetypes. In terms of an underlying $RPS$, each archetype can be considered an object that has an advantage against certain archetypes and a disadvantage against others. For instance, as we described above, decks that start very aggressively generally lose against decks that build up a little slower and more linearly, which typically lose against even slower combo decks, which then, in turn, lose against the very fast aggressive decks. 

Different pure strategies in a game $G$ can then serve as a model for the archetypes played between two random competitive players. The playable condition on $G$ then corresponds to the idea that all archetypes in the game should see play in high-level competition. Moreover, strong playability implies that each archetype must be played, given enough players. In contrast, weak playability suggests that for each archetype, there is a meta-game in which it must be played if given enough players. This condition ensures a diverse and exciting competitive environment, as one, a priori, should expect to face several different archetypes.

Just as in the ecological game, we expect that if the game $G$ that models the meta-game is imbalanced yet playable, the meta-game supports the creation of new strategies. The fundamental motivation is the same, in that with more strategies, the net imbalance seems to decrease. For instance, suppose an imbalanced but strongly playable game $G$ was a model for the current interactions between archetypes. In that case, both the archetypes that dominate most archetypes and the archetypes that are dominated by most archetypes must see play. The imbalance/asymmetry within $G$ makes brewing new archetypes and sub-archetypes very appealing, as there is a great deal of creative freedom in attempting to dominate different sets of asymmetric archetypes, while knowingly paying the cost of losing regularly to some other subset. For instance, to create a competitive archetype, one can begin by trying to dominate the current highest dominating archetype. In so doing, they can be sure that their deck is at least minimally competitive in that it beats a very dominant and commonly played deck. This directly adds pressure to create more balance; however, creating a foil to a single archetype, while generally easy, will not be very competitive in random matchmaking. This reality furthers the drive to compete against multiple sets of archetypes, further cementing a new niche in the meta-game if successful. In this way, imbalance can further add evolutionary pressure, while playability provides the stability that ensures that competition stays diverse.

In the case of competitive \textit{Magic: The Gathering}, this behavior is highly sought after. In particular, in past healthy equilibria, certain well-known archetypes have been known to beat most other archetypes, yet consistently make only one or two spots in the top eight in large competitions. On the other hand, in past unhealthy equilibria, single archetypes have occupied six of the top eight spots, and the game as a whole was considered by many to be unplayable \cite{mtghist}. We could further expand the validity of this model by using a more complex game that accounts for the variance in win probabilities between different archetypes.

\section{Appendix: Blow-Ups in Games}

    In the following section, we develop a construction for combining two games by analogy to the topological concept of blow-ups. In the sequel, we extend this construction to multiplayer $3$-$RPS$ games as a way to construct playable $m$-player $(2n+1)$-object $RPS$, which we conjecture to be maximally imbalanced. 
    The topological concept of a blow-up replaces a point in a manifold with a full copy of a different canonical manifold, in a way that stitches together the relevant structures nicely. Our new construction is analogous in that, in the symmetric case, we replace an object in the game with an entirely new game that preserves as much structure as possible from the original. 
    
    In terms of graph theory, this can be thought of as a modular product of the two graphs corresponding to the given games, as in \cite{Salha2022MaxDecomposability}. Similarly, in game theory, this can also be considered as the partial lexicographic product of two games, where again the preference between the first and the second game only occurs when tied on a singular pure strategy, as opposed to throughout the existence of both games. 
    
    We can more precisely define a blow-up here as:    
    \begin{defi}[Blow-Ups]
        The blow-up of two discrete two-player games, $G_1, G_2$, over pure strategies $l\in G_1$ for player $p$ and $m\in G_1$ for player $q$, denoted as $G_1\#_{(l,m)} G_2$, is a game with the pure strategies in $(G_1\setminus \{l,m\}) \cup G_2$. If we let $(G_1\setminus (l,m))$ be the payoff matrix of $G_1$ without the row corresponding to $l$ and the column corresponding to $m$, $(G_1)|_l$ be the $(|G_1|-1)\times |G_2|$ matrix with each column being a copy of the column corresponding to $l$ in the payoff matrix of $G_1$, and similarly define $(G_1^\top)|_m$ as the $|G_2|\times (|G_1|-1)$ matrix with each row being a copy of the row corresponding to $m$ in the payoff matrix of $G_1$, then the payoff matrix of this new game is:
        \begin{equation}
            A_{G_1\#_{(l,m)} G_2}=\begin{bmatrix}
                (G_1\setminus (l,m)) & (G_1)|_l \\
                (G_1^\top)|_m& G_2 
            \end{bmatrix}
        \end{equation}
        In this new game, if both players play strategies in $G_2$, the payoff matrix is the payoff matrix in $G_2$. If both players play in $G_1$, with neither player playing $l$ or $m$, respectively, then the payoff matrix is the payoff matrix in $G_1$. If player $p$ plays $j\in G_2$, and $q$ plays $i\in G_1\setminus m$ the payoff is the same as those in $G_1$ if $p$ played $l$ and $q$ played $i$. Similarly if $q$ played $k\in G_2$, and $p$ played $i\in G_1\setminus l$ the payoff is the same as those in $G_1$ if $q$ played $m$ and $p$ played $i$. For symmetric games $G_1\#_l G_2=G_1\#_{(l,l)}G_2$.  
    \end{defi} 

    \begin{figure}[ht]
        \centering
        \includegraphics[width=.5\linewidth]{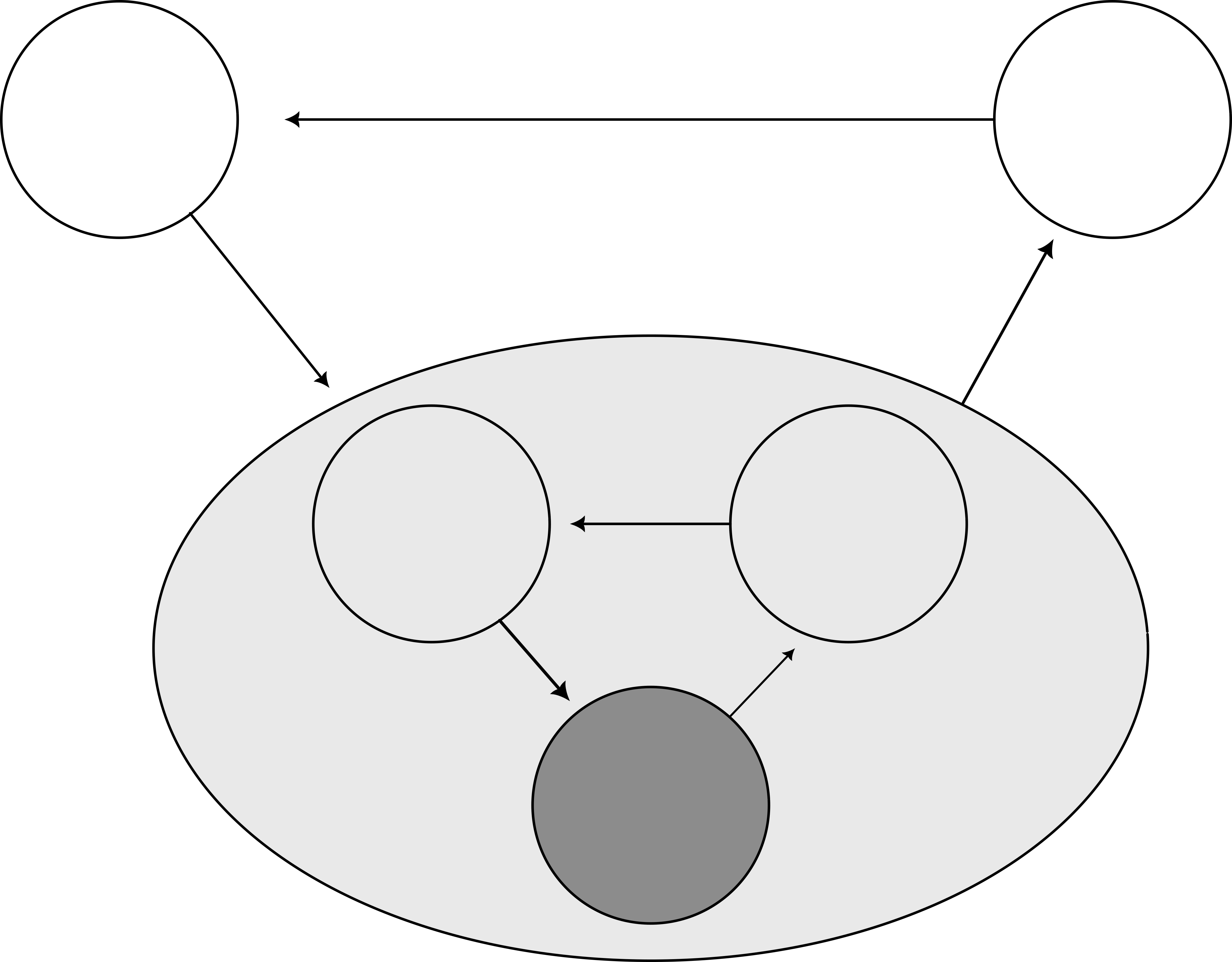}
        \caption{A visual representation of blowing up in $3$-$RPS$. The light gray circle is a blow-up at $s$ in the top $3$-$RPS$. We can again blow up with this game in the dark gray circle to continue this process recursively.}
        \label{fig:enter-label}
    \end{figure}
    
A blow-up of a playable game by another playable game is not necessarily playable. However, in the symmetric zero-sum case, given any playable Nash equilibrium, for $G_1$, $(\mathbf{v}_1,\mathbf{v}_2)$, and $G_2$, $(\mathbf{w}_1,\mathbf{w}_2)$, we can construct a Nash equilibrium in $G_1\#_{l}G_2$, with probability vectors:
    \begin{equation}
    \begin{split}
        \{(v_{1,1},\dots, v_{1,l-1},v_{1,l+1},\dots, v_{1,n},(v_{1,l})w_{1,1},\dots, (v_{1,l})w_{1,m}), \\
    (v_{2,1},\dots, v_{2,l-1},v_{2,l+1},\dots,v_{2,n}, (v_{2,l})w_{2,1},\dots, (v_{2,l})w_{2,m})\}
    \end{split}
    \end{equation}
     If this were not a Nash equilibrium, then one of the players could get an advantage by changing strategies; however, we can project $p$'s strategy in $G_1\#_{l}G_2$ to $G_1$, which takes it to $\mathbf{v}_1$ or project it to $G_2$, taking it to $\mathbf{w}_1$. By applying the corresponding payoff matrices, in $G_1, G_2$, $q$ has a non-positive expected value for each object; by iterated expected values, the same is true in $G_1\#_{l}G_2$. Therefore, in the blown-up game, this is a Nash equilibrium.
    
    If $G_1$ were 2-strongly playable, then every Nash equilibrium would be in this form. We can demonstrate this by noting that all players must play in the copy of $G_2$ in  $G_1 \#_{l} G_2$ in every Nash equilibrium. As if player $p$ does not play in $G_2\subset G_1\#_{l}G_2$, their payoff is identical to that of if they chose a mixed strategy in $G_1$ that does not involve object $l$ in $G_1$. As $G_1$ is 2-strongly playable, this must give $q$ a strategy with a net positive payoff, $s^q$, in $G_1$. Applying $s^q_o$ for all objects $o\in G_1\setminus l \subset G_1\#_{l}G_2$, and $s^q_l$ to some object in $G_2 \subset G_1\#_{l}G_2$, exactly mirrors the payoffs for $s_q$ as a strategy in $G_1$. Thus, as both players must play strategies in $G_1\#_{l}G_2$ that canonically project to $G_1$ and $G_2$, by iterated expected value, the projection of the strategies is also a Nash equilibrium strategy in $G_1$ and $G_2$.
    
   The purpose of highlighting this construction of blow-ups is that our maximally imbalanced balanced $(2n+1)$-$RPS$ can be equivalently defined as: 
   $3$-$RPS\#_{S}(3$-$RPS\#_{S}\dots)$ $n$ times, and $\N$-$RPS$ is defined by infinite blow-ups: $3$-$RPS\#_{S}(3$-$RPS\#_{S}(\dots$. This construction reveals that the least balanced but playable $(2n+2m+1)$-$RPS$ can be constructed by blowing up the least balanced but playable $(2n+1)$-$RPS$, $G_1$, with the least balanced but playable $(2m+1)$-$RPS$, $G_2$, at the most balanced object, $s\in G_1$. This leads to the conjecture:
   \begin{conj}[$m$-player Imbalanced $RPS$ Games come from Blow-ups]
       The most imbalanced yet strongly playable $m$-player $(2n+1)$-$RPS$ game is equivalent to the blow up of the most imbalanced yet strongly playable $(2m+1)$-$RPS$ game, $G_1$ with the most imbalanced yet strongly playable $(2(n-m)+1)$-$RPS$ game, $G_2$ at the most balanced object of $G_1$, for an $m<n$.
   \end{conj}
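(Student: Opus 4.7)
The plan is to prove the conjecture in two movements: construct an explicit symmetric Nash equilibrium for the proposed blow-up and verify it realizes the claimed imbalance (achievability), then rule out any more imbalanced strongly playable $m$-player $(2n+1)$-$RPS$ via multi-player analogs of the $S_u$ and $S_N$ majorization arguments (optimality).

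For the achievability direction, I would extend the blow-up equilibrium construction from the two-player case. Given symmetric equilibria with probability vectors $\mathbf{v}$ for $G_1$ and $\mathbf{w}$ for $G_2$, the natural candidate symmetric equilibrium for $G_1\#_s G_2$ assigns probability $v_i$ to each object $i \in G_1\setminus\{s\}$ and probability $v_s \cdot w_j$ to each object $j \in G_2$. By the iterated expected value argument used in the two-player case, any deviation by a single player projects either into $G_1$ or into $G_2$, and each projection has non-positive expected payoff by assumption. Applied recursively to the iterated blow-up of the maximally imbalanced $(2m+1)$-$RPS$ with the maximally imbalanced $(2(n-m)+1)$-$RPS$, this gives explicit Nash probabilities whose sequence should be the $m$-player analog of the geometric sequence $\{1/3^j\}$ from the two-player case.

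For the optimality direction, I would extend Corollary \ref{mincor} to the $m$-player setting. In the two-player case, strong playability was equivalent to the tournament being strong by Lemma \ref{RPS-is-trivial}, and Landau's theorem gave the sharp bound $\sum_{i=1}^k e^{\uparrow}_{out,i} \geq \binom{k+1}{2}$. For $m$-player $(2n+1)$-$RPS$, more opponents can redistribute their collective mass across more objects in a single mixed strategy, so the weak domination condition enforced by strong playability is stricter, and I would expect the corresponding $k$-minimizing bound to require roughly $m$ buffer objects at each tier rather than a single one. Once this sharpened Landau-type bound is in place, strict majorization of both edge sequences and Nash-probability sequences should follow by the same pattern as in Lemma \ref{minismaj} and the $S_u$, $S_N$ arguments of Theorem \ref{unfairthm}, now extended to the multi-player statistics introduced in the sequel.

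The main obstacle is precisely this multi-player Landau-type bound. Unlike the two-player case, where Lemma \ref{RPS-is-trivial} reduces strong playability to a kernel computation, for $m\geq 3$ the Nash equilibrium structure is genuinely richer: multiple symmetric equilibria may coexist, and asymmetric equilibria must also be controlled. The specific parameters $(2m+1)$ and $(2(n-m)+1)$ in the statement encode exactly how buffers distribute between the outer and inner games, so proving tightness requires a careful joint induction on $m$ and $n$ together with a ruling-out of asymmetric or non-iterated blow-up configurations. Establishing that no alternative construction, for instance a blow-up at a non-balanced object, or a blow-up of two suboptimally sized imbalanced $RPS$s whose combined object count still equals $2n+1$, can beat the specific iterated construction is likely the technically hardest sub-goal and is what currently prevents the statement from being a theorem rather than a conjecture.
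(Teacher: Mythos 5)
First, note that the paper offers no proof of this statement: it is stated as a conjecture, and the paper explicitly defers even the supporting evidence to the sequel. There is therefore no argument of record to compare yours against, and your proposal must be judged as a research plan rather than matched to an existing proof. Read that way, it is a reasonable plan that mirrors the paper's two-part template from Theorem \ref{unfairthm} (exhibit the construction's equilibrium explicitly, then rule out competitors by majorization), and you are right to locate the crux in a multi-player Landau-type bound. But the plan does not constitute a proof, and you say so yourself.

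The gaps are real and go somewhat beyond the one you flag. First, the blow-up operation is defined in this paper only for two-player games, so even stating the conjecture presupposes an $m$-player blow-up whose payoff structure you would have to fix before the ``iterated expected value'' deviation argument can run; with $m$ players a unilateral deviation no longer projects cleanly onto a two-player subgame, since the deviator's payoff depends on the joint distribution of $m-1$ opponents across $G_1\setminus\{s\}$ and $G_2$. Second, the optimality half leans on machinery that Lemma \ref{RPS-is-trivial} establishes only for two players: for $m\geq 3$ the equilibrium set is not a single totally mixed point, so the worst-case equilibrium entering the $S_N$-type statistics must be controlled separately, and Corollary \ref{mincor} has no proven analog. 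Third, be careful with the parameters: the symbol $m$ in the conjecture appears to do double duty as the player count and as the split size (compare the remark immediately preceding it, where a generic $(2n+2m+1)$-$RPS$ is decomposed as $(2n+1)\#(2m+1)$). Under your reading the split is forced by the player count and your ``$m$ buffers per tier'' heuristic is the key unknown; under the other reading the claim is essentially that the maximal construction is closed under arbitrary blow-up decompositions at the most balanced object, which is a different (and in the two-player case already verified) assertion. Resolving that ambiguity is a prerequisite for setting up the joint induction you propose. None of this makes your approach wrong, but, as you conclude, it does not upgrade the conjecture to a theorem.
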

   We show evidence for this conjecture in the sequel.

\phantomsection
\bibliographystyle{amsplain}
\bibliography{references}

\end{document}